\newtheorem{theorem}{Theorem}
\newtheorem{remark}{Remark}
\def\BibTeX{{\rm B\kern-.05em{\sc i\kern-.025em b}\kern-.08em
		T\kern-.1667em\lower.7ex\hbox{E}\kern-.125emX}}
\begin{document}
 
\title{Computation Offloading in the Untrusted MEC-aided Mobile Blockchain IoT System }
\author{Yiping Zuo,
        Shi Jin,~\IEEEmembership{Senior Member,~IEEE,}
        and~Shengli Zhang,~\IEEEmembership{Senior Member,~IEEE}
\thanks{Y. Zuo and S. Jin are with the National Mobile Communications Research Laboratory, Southeast University, Nanjing 210096, China (e-mail: zuoyiping@seu.edu.cn; jinshi@seu.edu.cn).} 
\thanks{S. Zhang is the College of Information Engineering,
	Shenzhen University, Shenzhen 518060, China (e-mail: zsl@szu.edu.cn) }
}
\maketitle
\begin{abstract}
Deploying mobile edge computing (MEC) server in the mobile blockchain-enabled Internet of things (IoT) system is a promising approach to improve the system performance, however, it imposes a significant challenge on the trust of MEC server. To address this problem, we first propose an untrusted MEC proof of work scheme in mobile blockchain network where plenty of nonce hash computing demands can be offloaded to MEC server. Then, we design a nonce ordering algorithm for this scheme to provide fairer computing resource allocation for all mobile IoT devices/users. Specifically, we formulate the user's nonce selection strategy as a non-cooperative game, where the utilities of individual user are maximized in the untrusted MEC-aided mobile blockchain network. We also prove the existence of Nash equilibrium and analyze that the cooperation behavior is unsuitable for the blockchain-enabled IoT devices by using the repeated game. Finally, we design the blockchain's difficulty adjustment mechanism to ensure stable block times during a long period of time. 

\end{abstract}
\begin{IEEEkeywords}
Mobile edge computing, blockchain, IoT, nonce ordering, non-cooperative game
\end{IEEEkeywords} 
\IEEEpeerreviewmaketitle
\section{Introduction}
As an emerging frontier technology, blockchain has received increasing attention in recent years. The success of the blockchain concept is ultimately connected with Bitcoin \cite{nakamoto2008bitcoin}. Blockchain technology is not only limited in Bitcoin, but also used in many communication areas, such as Internet of things (IoT)\cite{dai2019blockchain,cao2019when,conoscenti2016blockchain,banerjee2018blockchain,danzi2018analysis,danzi2019delay}, edge computing\cite{xiong2018when,xiong2018cloud,jiao2018social,jiao2019auction,luong2018optimal}, spectrum sharing \cite{Bayhan2018Spass,Kotobi2017blockchain} and interference management\cite{Gamal2019ASingle}. The security and reliability of blockchain are mainly determined by the distributed consensus mechanisms, for example, proof of work (PoW) mining mechanism (see Section \ref{PoW} for details) for Bitcoin system\cite{garay2015bitcoin}, proof of stake for Ethereum \cite{bentov2016cryptocurrencies} and so on. In essence, blockchain is a tamper-proofing and distributed database, which records transactional data without the requisition of a trusted authority or central server in the peer-to-peer (P2P) network. 

Although blockchain has many advantages, heavy equipments and fixed access points have become shortcomings of the traditional blockchain network, which can limit its further development. To tackle this problem, mobile blockchain with large-scale IoT devices \cite{cao2019when} would be a better choice in the future. Nevertheless, IoT devices have poor privacy, security and scalability. Fortunately, the emergence of blockchain technology creates opportunities to overcome the aforementioned drawbacks. The research on mobile IoT devices deployed in the mobile blockchain network is emerging. Recently, some researches on the combination of blockchain and IoT technologies have been published \cite{conoscenti2016blockchain,banerjee2018blockchain,danzi2018analysis,danzi2019delay}.
Specifically, authors in \cite{banerjee2018blockchain} had a systematic survey on IoT security and proposed that the frontier blockchain technology has enormous potentials as the solution scheme. \cite{danzi2018analysis} has investigated synchronization protocols with different communication costs and security levels, which synchronize mobile IoT users with the blockchain network. Blockchain technology provides a distributed architecture for coordinating IoT devices, whereas it is not feasible to store the full blockchain ledger for low-power and memory-constrained IoT devices. Thus authors have proposed a lightweight software scheme, where IoT devices only download useful data structures in \cite{danzi2019delay}.

The above researches improved the system performance to some extent. Nevertheless, mobile blockchain network is constrained by consuming too much hash computing, storage, and energy resources during the mining process on mobile IoT devices. It is unable to satisfy some characteristics of IoT devices with relatively low computing ability, low power consumption, and scattered low-bandwidth wireless connections.  
Instead, mobile edge computing (MEC) is a network architecture concept that implements cloud computing capabilities and an IT service environment at any network edge \cite{mach2017mobile,mao2017a,yu2018a}. So MEC can supply these resources for mobile blockchain-enabled IoT devices. Deploying MEC server in the mobile blockchain network is a feasible way to handle the low computing power dilemma \cite{xiong2018when,xiong2018cloud,jiao2018social,jiao2019auction,luong2018optimal}. In particular, the PoW mining process focuses on finding a correct nonce for solving PoW puzzle. While it is unrealistic for mobile IoT users to continuously run such a computation-intensive mining task, which requires high computing ability and consumes substantial energy and time. Due to the prominent characteristics of MEC server, such as low latency, high mobility, and wide regional distribution, we consider to offload the blockchain mining tasks to MEC server.     
                 
Currently, there is substantial interest in investigating computation offloading in the mobile blockchain network with MEC server. For example, prior works in \cite{xiong2018when,xiong2018cloud} have formulated the interaction between MEC server and users as the two-stage Stackelberg game in the mobile blockchain network. The Stackelberg equilibrium ensures that MEC server maximizes its revenue. Alternatively, it is also significant to maximize the profit of mobile users.
Auction schemes in \cite{jiao2018social,jiao2019auction,luong2018optimal} are the feasible solutions to accomplish this goal. The work in \cite{jiao2018social} has investigated an auction-based MEC system model to maximize the revenue of mobile users. Essentially, the auction of the edge computing service is an optimization problem, which tackle the biggest profit of users. This optimization problem is a nonlinear programming problem with linear constraints, which is NP-hard. Authors in \cite{jiao2019auction} have applied the approximate fractional relaxation and local search (FRLS) method to ensure a lower bound. Whereas MEC server cannot achieve optimal auction by using traditional single-item auctions. An optimal single-item auction for the edge computing resource allocation was developed via a multi-layer neural network architecture based on an analytical solution of the optimal auction in \cite{luong2018optimal}. Another line of research takes content cashing and different offloading modes into consideration \cite{liu2018computation,wu2018optimal,liu2019distributed}, because PoW consensus mechanism of blockchain also requires to spend lots of storage resources in the mining process. To tackle this problem, the work in \cite{liu2018computation} has considered that computationally difficult mining task can be offload to nearby edge computing nodes and transactional content of blocks can be cached in the MEC server. This paper has proposed two offload modes, offloading to the nearby access point (AP) or a group of nearby D2D devices. In addition, the research in \cite{wu2018optimal} has also considered the advanced multi-access MEC server in the mobile blockchain network. 

In these previous researches, many works have investigated the resource allocation of MEC server in a mobile blockchain network with the assumption that MEC is a trusted server. However, in this case, MEC server will have an opportunity to profit from the usage of information of IoT users. It is likely that MEC server will allocate more computing resources to some selfish users so as to obtain more revenue. Generally speaking, if MEC is a trusted server, it may lead to unfair computing resource allocation for mobile users and also increase the possibility of malicious collusion between MEC server and selfish users. To the best of authors' knowledge, few works focus on untrusted MEC servers in the mobile blockchain network. Moreover, the previous works treated the mining task as a general computing demand, which really reflect the characteristics of the blockchain mining task. In this paper, different from previous works, we consider an untrusted MEC PoW scheme and specially design a nonce ordering algorithm for the scheme. In the untrusted MEC PoW framework, user's nonce selection strategy is formulated as a non-cooperative game, where the utilities of individual user are maximized. Meanwhile, we design a blockchain's difficulty adjustment mechanism for keeping network stable.  

In this paper, we intend to obtain insights into users' nonce selection strategies and blockchain's difficulty adjustment mechanism in the proposed untrusted MEC-aided mobile blockchain network. The main contributions of this paper are listed as follows.
\begin{itemize}
	\item[1)] We propose an untrusted MEC PoW scheme and a nonce ordering algorithm to provide fairer nonce computing resource allocation for all mobile IoT users. For any mobile user, such ordering algorithm guarantees that the probability of successfully mining a new block corresponding to other users is proportional to its nonce length.
	
	\item[2)] Based on \textit{Theorem} \ref{SuccessfulProbability}, we formulate the user's nonce selection strategy as a non-cooperative game, in which the utilities of individual user are maximized. We analyze the Nash equilibrium (NE) existence in this game and propose an alternating optimization algorithm of users' nonce selection. Furthermore, we drive the NE in analytical expression which only needs to know partial information of other users and analyze that cooperation is unsuitable for the blockchain-enabled IoT devices in the repeated game.
	
	\item[3)] We design the blockchain's difficulty adjustment mechanism to keep block times stable during a long period of time. The mean and variance of rounds for mining $G$ blocks are derived. Moreover, the adjustment rule of difficulty factor is to minimize the distance between the average rounds for $G$ blocks and the predefined threshold.
	
	\item[4)] Simulation results verify the effectiveness of our proposed nonce ordering algorithm under the untrusted MEC PoW scheme. Then we demonstrate that the expected optimal user's nonce selection strategy is influenced by multiple system parameters jointly, such as the block size, the fixed reward for successfully, the transaction fee rate, etc. In addition, the design of blockchain's difficulty adjustment mechanism achieves a good performance over a long period of time. 
\end{itemize}

The rest of the paper is organized as follows. Section \ref{sec:systemmodel} introduces the PoW mining process and system model of MEC-aided mobile blockchain network. Section \ref{sec:untrustedMECPoWscheme} represents our proposed untrusted MEC PoW scheme and nonce ordering algorithm. Section \ref{SystemFormulationandAnalysis} formulates user's selection strategy problem, designs blockchain's difficulty adjustment mechanism and affords some theoretical analysis. Simulations are presented in Section \ref{sec:SimulationResults} to confirm the analytical results. At last, we conclude the main results of the work in Section \ref{sec:conclu}.

\section{Preliminaries and System Model}\label{sec:systemmodel}
In this section, we will introduce the background on PoW mining mechanism in the Bitcoin network and MEC system model of edge computing resource allocation for mobile blockchain network.
\subsection{PoW Mining Mechanism}\label{PoW}
As a matter of fact, PoW mining mechanism is to prevent low-computing malicious entities from publishing blocks arbitrarily. To obtain the billing power, nodes in the network have to construct a block in advance and select some unconfirmed transaction records into a new block. Then, each node solves the PoW puzzle, making the hash value of the block header smaller than the blockchain's target value by changing nonces randomly. The input of the hash function is the block header and the output is a 256-bit hash value. This process of solving PoW puzzle is called mining. Once the PoW puzzle is resolved, the newly mined block will be immediately announced to the whole blockchain network. Meanwhile, the other nodes receive this information and execute a validating process to decide whether to approve and add a newly generated block to the blockchain or not. Each extension of this block is equivalent to an additional confirmation of the transaction in the block. If getting 6 confirmations, the block is approved by the whole network and encapsulated in the historical block. The miner which successfully mines a new block will achieve a certain number of reward, including a fixed bonus and a variable transaction fee, as an incentive of mining. 

The Bitcoin system will control the completion time of mining a new block in about 10 minutes. If the block time is less than 10 minutes, the Bitcoin system will automatically increase the difficulty value and the number of 0 at the beginning of the target hash value. Oppositely, if the block time is higher than 10 minutes, the number of 0 at the beginning of the target hash value is appropriately reduced to obtain a lower difficulty value. To simplify the explanation, let $H\left(  \cdot  \right)$ denote the hash function which uses a cryptographic algorithm to generate a short summary from any size of data. $X$ denotes block header information, such as version, the hash value of the previous block, Merkle root and so on. Given an adjustable hardness condition parameter $h$, the process of PoW puzzle solution aims to search a correct nonce to be included in the block. The target hash value of block header $bh$ which concatenates $X$ and nonce is smaller than a target value $V(h)$:
\begin{equation}\label{difficulty}
bh = H\left( {X||nonce} \right) \le V\left( h \right),
\end{equation}
where we have $V(h) = {2^{L - h}} = \frac{2^{L}}{{D\left( h \right)}}$, $L$ denotes the fixed length of bits, determining the searching space of the hash function, i.e., all $nonce \in \left[ {0,{2^L-1}} \right]$, and  $D(h)$ is blockchain's difficulty value.
\subsection{MEC System Model}
The mobile blockchain network is constrained because the mining process demands too much computing, storage, and energy resource on mobile IoT devices. Instead, MEC server can supply these resources for mobile blockchain-enabled IoT devices. We only consider computing resources in this paper. Fig.~\ref{fig:edgecomputingbc} depicts the system model of MEC-aided mobile blockchain network, which includes single MEC and $N$ mobile IoT devices/users running blockchain application denoted as ${\cal N} = \left\{ {1,...,N} \right\}$. Due to the computing limitation on mobile IoT devices, users want to offload the task of solving the PoW puzzle to MEC server. Besides, all mobile IoT users send requests to MEC server in a time slot.  
\begin{figure} [!htb]
	\centering
	\includegraphics[width=0.6\linewidth]{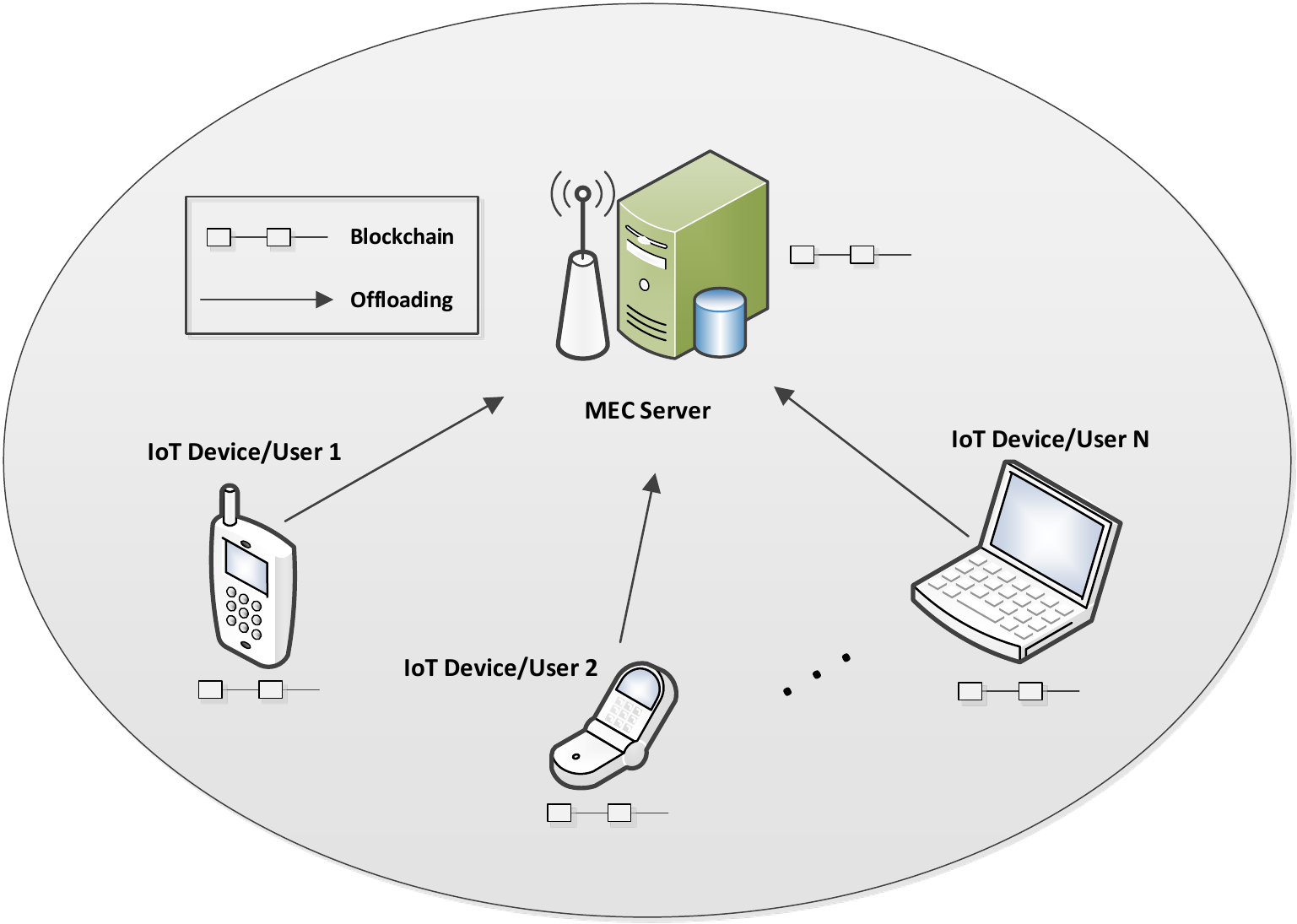}
	\caption{Edge computing in the mobile blockchain network.}
	\label{fig:edgecomputingbc}
\end{figure}
Previous researches on the MEC server in a mobile blockchain network have the assumption that MEC is a trusted server. If MEC is a trusted server, it may lead to unfair computing resource allocation for mobile users and increase the possibility of malicious collusion between MEC server and selfish users. Consequently, we consider an untrusted MEC server in this paper and focus on designing a reasonable allocation mechanism to make computing resource allocation fairer for all users. 

On the other hand, each user independently selects nonces for hash computing because the block content of each user is different. Even though each user chooses the same nonce, the corresponding hash is different (if the SHA-256 hash function is used, then all $nonce \in \left[ {0,{2^{32}}} \right]$). For mobile IoT devices, the computation-intensive mining demand is too high. Given the cost of renting MEC's service and mining time, the final user's revenue can be reduced or even negative. Thus we assume that each user only chooses some nonces to do hash computing. 

With the combination of above analysis, we put forward an untrusted MEC PoW scheme and design a nonce ordering algorithm for this mobile blockchain network. The detailed process of this new scheme is described in the following Section \ref{sec:untrustedMECPoWscheme}. 
\section{Untrusted MEC PoW scheme} \label{sec:untrustedMECPoWscheme}
In this section, we focus on the untrusted MEC PoW scheme in the mobile blockchain network. We design a nonce ordering algorithm for this scheme to achieve fairer hash computation resource of MEC server for all mobile IoT users.
\subsection{General Process of the Untrusted MEC PoW}
Selected nonces of each user are open and transparent in the whole network so that it can reduce the cheating possibility by the untrusted MEC server. Then MEC server provides hash computing service to these nonces. Mobile IoT user $i$ randomly selects a part of nonces as own computing demand by vector ${m}_i$ for $i=1,2,...,N$ and the nonce number/length of ${m}_i$ is denoted as ${M_i} = \left| {{m_i}} \right|$. Without loss of generality, nonces in the sequence $m_i$ are ordered from small to large by ${m_i} = \left\{{m_1^i,m_2^i,...,m_{{M_i}}^i} \right\}$, where $m_f^i$ is the element of $m_i$ for $f=1,2,...,M_i$. The MEC server, i.e., the seller, sells the computing services, and the users, i.e., the buyers, access and consume this service from the nearby MEC. All users submit their nonce sequence demand profile ${\bf{m}} = \left\{ {{m_1},{m_2},...,{m_N}} \right\}$ to MEC server. Next, we represent the general process of the untrusted MEC PoW scheme as follows.

After having received users' computation demands, the MEC server provides the hash computing service and achieves payment from all mobile IoT users. We introduce the detailed interaction process between MEC server and users in the untrusted MEC PoW scheme as follows.

\begin{itemize}
	\item \textbf{Step 1:} Users select nonces. Each mobile user randomly selects some nonces as its own computing demand. 
	
	\item \textbf{Step 2:} Submit tasks to MEC. All mobile users submit their mining tasks, i.e., the nonce hash computing demand profile $\bf{m}$ to MEC server.
	
	\item \textbf{Step 3:} MEC accepts the tasks. The MEC server accepts mining tasks for the users who arrive in a time slot and those users arriving after this time slot are not accepted. We assume that $N$ users submit the nonce sequences to the MEC server for hash computing in a time slot. 
	
	\item \textbf{Step 4:} Nonce ordering. Since MEC is an untrusted server, it is necessary to order nonces before providing mobile users with the computing service, so that MEC provides the computing service to users as fair and transparent as possible. The ordering algorithm adopted in this paper is introduced below (see Section \ref{subsection:nonceordering} ).
	
	\item \textbf{Step 5:} MEC provides services. The MEC server provides the computing service and receives payment from users. Once a user has coped with the PoW puzzle, MEC server will stop all tasks immediately and announce the result to all mobile IoT users. Then a new round of computing resource allocation will begin. 	
\end{itemize}
\subsection{Nonce Ordering} \label{subsection:nonceordering}
The following work highlights that how to order users' nonces on the MEC server for providing users computing services fairly, that is, how to map multiple nonce sequences into a sequence. We introduce this ordering algorithm for the untrusted MEC PoW scheme as follow:	
\begin{figure*}[!t]
	\centering
	\includegraphics[width=1.0\linewidth]{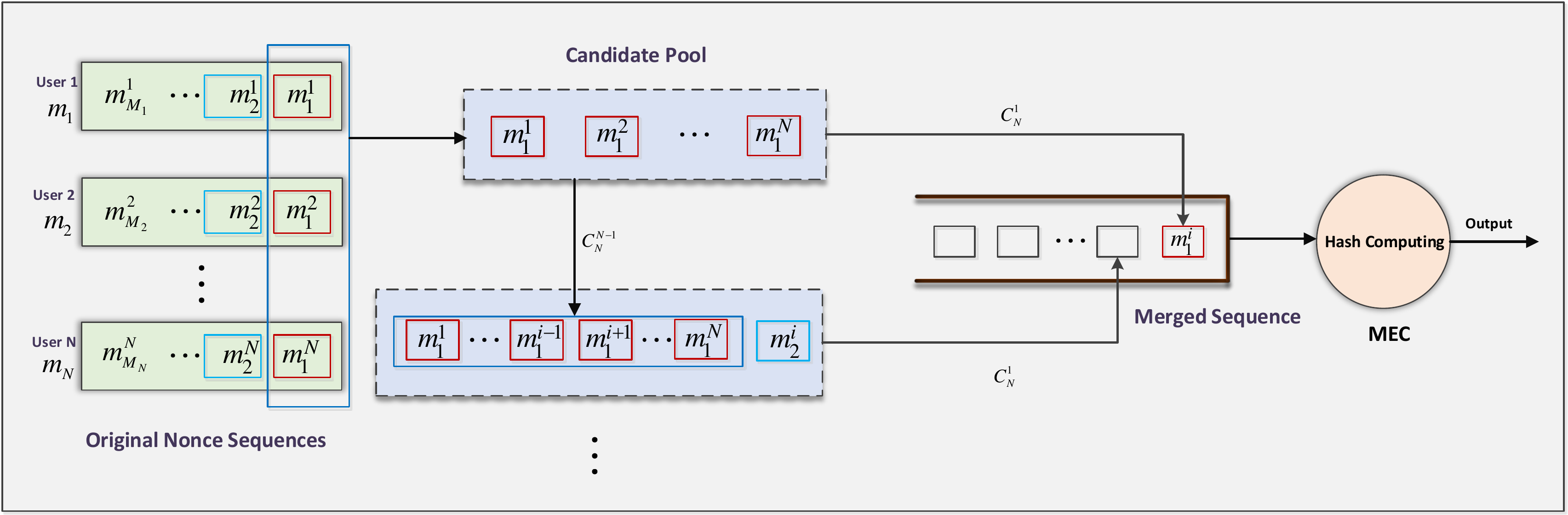} 
	\caption{The nonce ordering for the untrusted MEC server. $N$ original nonce sequences can be mapped into a merged sequence according to an ordering algorithm, which selects one nonce in each round of nonce ordering, then the remaining nonces in the previous round are treated as elements of the next round into candidate pool. Finally, the untrusted MEC server provides hash computing service for this merged nonce sequence.	
	}	
	\label{fig:noncequeuing}		
\end{figure*}
Firstly, each user's nonces have a local order from small to large, i.e.,  $m_1^i < m_2^i < ... < m_{{M_i}}^i$ for $i=1,2,...,N$. The original $N$ sequences are denoted $m_1,m_2,...,m_N$. We use another $N$ sequences, $w_1,w_2,...,w_N$ to denote the nonces selected from the corresponding original sequences to the merged sequence. Initially, the sequence $w_i$ is empty for $i=1,2,..,N$. 
As showed in Fig.~\ref{fig:noncequeuing}, the untrusted MEC server selects the top nonce in each user's nonce sequence as the alternative set into candidate pool. We intend to design a fairer nonce ordering algorithm to select a nonce in each round. Then the remaining nonces in the previous round are taken as elements of the next round into candidate pool. We desire that $N$ nonce sequences are merged into one sequence as shown in Fig.~\ref{fig:noncequeuing}. The objective of our ordering algorithm is that for any position in the merged sequence, the number of selected nonces of each original sequence is proportional to its nonce length. So the target probability of being served for user $i$ is
\begin{equation}\label{probability_p1}
{p_i} = \frac{{{M_i}}}{{\sum\limits_{j \in \mathcal{N}} {{M_j}} }},\;\text{for}\;i = 1,2,...,N,
\end{equation}
and the target probability mass is ${{P}}{\text{ = }}\left\{ {{p_1},{p_2},...,{p_N}} \right\}$.
Suppose the length of sequence $m_i$ is $l_i$ and the length of sequence $w_i$ is $k_i$. 
In each round of nonce ordering, a nonce of each original sequence is selected to the candidate pool as shown in Fig.~\ref{fig:noncequeuing}, and then a nonce is randomly selected among these $N$ nonces to the merged sequence. The index of the round is denoted as $n$ and the lengths of sequence $m_i$ and sequence $w_i$ are ${l_i}\left( n \right) = {l_i}\left( {n - 1} \right) - 1$ and ${k_i}\left( n \right) = {k_i}\left( {n - 1} \right) + 1$ respectively.  We calculate the expected winning probability of being served for user $i$, suppose it to be ${q_i} = {k_i} \textfractionsolidus \sum\nolimits_{j \in N} {{k_j}}$ and the expected winning probability mass is ${{Q}}{\text{ = }}\left\{ {{q_1},{q_2},...,{q_N}} \right\}$. In each round of nonce ordering, the top nonce of the $i$-th sequence $m_i$ is selected, so that there will be the distance between probability mass $P$ and $Q_i$, where $Q_i$ the expected winning experience probability distribution of user $i$ for $i=1,2,...,N$. In this case, the two probability distributions can be rewritten by  
\begin{equation}\label{probability_p2}
P = \left( {\frac{{{M_1}}}{{\sum\limits_{j \in \mathcal{N}} {{M_j}} }},\frac{{{M_2}}}{{\sum\limits_{j \in \mathcal{N}} {{M_j}} }},...,\frac{{{M_N}}}{{\sum\limits_{j \in \mathcal{N}} {{M_j}} }}} \right), \;\text{for}\;i = 1,2,...,N.
\end{equation}
\begin{equation}\label{probability_Q2}
{Q_i} = \left( {\frac{{{k_1}}}{{\sum\limits_{j \in \mathcal{N}} {{k_j}}  + 1}},...,\frac{{{k_i} + 1}}{{\sum\limits_{j \in \mathcal{N}} {{k_j}}  + 1}},...,\frac{{{k_N}}}{{\sum\limits_{j \in \mathcal{N}} {{k_j}}  + 1}}} \right),\;\text{for}\;i = 1,2,...,N.
\end{equation}
We will take a look at the way of comparing two probability distributions called Kullback-Leibler(KL) divergence. Next, we offer the definition of KL divergence as follows.\\
\textbf{Definition 1.} If we have two separate probability distributions $Q(M)$ and $P(M)$ over the same random variable $M$, we can measure the difference between the two distributions using the KL divergence:
\begin{equation}\label{KL1}
\begin{gathered}
{D_{KL}}\left( {Q\left\| P \right.} \right) \hfill \\
= {E_{M \sim Q}}\left[ {\log \frac{{Q\left( M \right)}}{{P\left( M \right)}}} \right] \hfill \\
= {E_{M \sim Q}}\left[ {\log Q\left( M \right) - \log P\left( M \right)} \right]. \hfill \\ 
\end{gathered} 
\end{equation}
We have a target probability distribution $P$ and wish to approximate it with another expect winning probability distribution $Q_i$, so that the probability of selecting each nonce for all users is fairer for $i=1,2,...,N$. The KL divergence between the two probability distributions $P$ and $Q_i$ is as follows,
\begin{equation}\label{KL2}
\begin{gathered}
{D_{KL}}\left( {{Q_i}\left\| P \right.} \right) \hfill \\
= \sum\limits_{i = 1}^N {{q_i}\log \frac{{{q_i}}}{{{p_i}}}}  \hfill \\
= \log \frac{{\sum\limits_{j \in \mathcal{N}} {{M_j}} }}{{\sum\limits_{j \in \mathcal{N}} {{k_j} + 1} }} - \left( {\sum\limits_{j = 1,j \ne i}^N {\left( {\frac{{{k_j}}}{{\sum\limits_{j \in \mathcal{N}} {{k_j} + 1} }}\log \frac{{{M_j}}}{{{k_j}}}} \right)} } \right. \left. { + \frac{{{k_i} + 1}}{{\sum\limits_{j \in \mathcal{N}} {{k_j} + 1} }}\log \frac{{{M_i}}}{{{k_i} + 1}}} \right). \hfill \\ 
\end{gathered} 
\end{equation}
By convention, in the context of information theory, we treat these expressions as $\lim _{k_i \rightarrow 0} k_i \log k_i = 0$, for $i=1,2,...,N$. We have the choice of minimizing ${D_{KL}}\left( {{Q_i}\left\| P \right.} \right)$ by 
\begin{equation}\label{optimal_i}
{i^*} = \arg \mathop {\min }\limits_i {D_{KL}}\left( {{Q_i}||{P}} \right), \;\text{for}\;\; i=1,2,...,N.
\end{equation}
So the $i^*$-th user's top nonce is selected to the merged sequence\footnote{If there are more than one nonce choices, we select one randomly in this case.}. We use the similar selecting rule as (\ref{optimal_i}) to select nonces to enter into the merged sequence, then multiple users' nonce sequences can be mapped into a sequence. Eventually, combining the above analysis part, the nonce ordering algorithm for the untrusted MEC PoW scheme can be obtained, as summarized in Algorithm \ref{orderingalgorithm1}.

\begin{algorithm}[!t]
	\caption{Ordering Algorithm for the Untrusted MEC PoW Scheme} \label{orderingalgorithm1}
	\setcounter{algorithm}{1}
	\begin{algorithmic}[1]
		\STATE {\textbf{Initialization:} input $N$ original nonce sequences $m_1$, $m_2,...$, $m_N$.}
		\REPEAT
		\FOR{$i=1$ to $N$}
		\STATE{Select one nonce from the $m_i$ sequence, calculate $Q_i$ by the equation (\ref{probability_Q2}).} \STATE{Then calculate the KL divergence between $P$ and $Q_i$ using the equation (\ref{KL2}).}
		\ENDFOR
		\STATE {In each round, select one nonce into a merged sequence by the rule (\ref{optimal_i}).}
		\UNTIL{all nonces of $N$ original sequences are mapped into a merged sequence.}
	\end{algorithmic}
\end{algorithm}
To analyze the performance of this untrusted MEC-aided mobile blockchain network system, we want to provide a mining probability for each user corresponding to other users in the following theorem under the above-proposed ordering algorithm. 
\begin{theorem}\label{SuccessfulProbability}
	For any mobile user $i$, the probability of successfully mining a new block corresponding to other users is nearest to this target probability: 
	\begin{equation}\label{UserProbability}
	{P_i^m} = \frac{{{M_i}}}{{\sum\limits_{i \in \mathcal{N}} {{M_i}} }},\;\;\text{for}\;i = 1,2,...,N.
	\end{equation}
\end{theorem}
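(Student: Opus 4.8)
The plan is to show that the greedy KL-divergence minimization in Algorithm~\ref{orderingalgorithm1} produces, at every stage, an empirical selection frequency vector $Q$ that tracks the target $P$ as closely as possible, and then to argue that the successful-mining probability of each user equals its empirical frequency in the merged sequence. I would split the argument into two parts: (i) a combinatorial/analytic claim that the greedy rule~(\ref{optimal_i}) keeps $\|Q - P\|$ (in the KL sense) minimal round by round, so that after all $\sum_j M_j$ rounds the normalized counts satisfy $k_i / \sum_j k_j = M_i / \sum_j M_j$ exactly (since each $m_i$ is fully consumed, $k_i = M_i$ at termination, which makes the final identity trivially exact — the interesting content is that \emph{every prefix} of the merged sequence is proportioned nearly as $P$); and (ii) a probabilistic claim translating ``fraction of the merged sequence belonging to user $i$'' into ``probability that user $i$ is the one who first hits a valid nonce,'' using the fact that MEC processes the merged sequence in order and halts at the first success, together with the assumption from Section~\ref{sec:systemmodel} that each individual nonce trial succeeds independently with the same small per-trial probability $V(h)/2^L$ regardless of which user or block content it came from.

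First I would set up the recursion: at round $n$, with current counts $k_1(n),\dots,k_N(n)$ and the candidate pool holding the current top nonce of each still-nonempty sequence, choosing user $i$ updates $k_i \to k_i+1$ and yields the post-selection distribution $Q_i$ of~(\ref{probability_Q2}); the rule picks $i^\ast$ minimizing $D_{KL}(Q_i\|P)$. I would verify that $D_{KL}(Q_i\|P)$ as written in~(\ref{KL2}) is correct, then show the minimizer is the user whose current ratio $k_i / M_i$ is smallest (equivalently, the user most ``underserved'' relative to target) — this is the discrete analogue of the classical result that greedy rounding of a real vector to integers, always incrementing the coordinate with the largest deficit, keeps every partial sum within distance~$1$ of the ideal fractional allocation (the ``apportionment''/Chernoff-style argument). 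From that bound I would conclude $|k_i(n) - n\,p_i| < 1$ for all $i$ and all $n$, i.e., the prefix of length $n$ of the merged sequence contains essentially $n p_i$ nonces of user $i$.

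Next I would handle the mining probability. Under the PoW model of Section~\ref{PoW}, scanning the merged sequence and stopping at the first nonce whose hash beats $V(h)$ makes the winner a function only of the \emph{order} of nonces, not of their owners' block contents (each trial is an independent Bernoulli$(\,V(h)/2^L\,)$). Hence $P_i^m = \sum_{t\ge 1} \Pr[\text{first success at position } t]\cdot \mathbf{1}[\text{position } t \text{ belongs to user } i]$, and conditioning on which position first succeeds, the probability that it belongs to user $i$ is the fraction of user-$i$ nonces among the prefix up to that position, which by part~(i) is within $O(1/t)$ of $p_i = M_i / \sum_j M_j$. Summing against the geometric distribution of the first-success position (whose mean is the modest constant $2^L / V(h)$ divided by the number of trials per round, so the first success typically lands within a bounded number of rounds) gives $P_i^m = p_i + o(1)$, and in the idealized regime where the whole merged sequence is scanned with difficulty scaled so that exactly one success occurs per block, $P_i^m$ is \emph{exactly} $p_i$ because then the winner is uniform over all $\sum_j M_j$ positions and $M_i$ of them are user $i$'s.

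The main obstacle I anticipate is part~(i): proving rigorously that the KL-greedy rule — not an $\ell_1$- or $\ell_\infty$-greedy rule — yields the prefix-boundedness $|k_i(n)-n p_i|<1$. KL divergence is not a metric and is asymmetric, so the standard apportionment lemma does not apply verbatim; I would need either to show that minimizing $D_{KL}(Q_i\|P)$ over the single-step increments coincides with the largest-deficit rule (I expect it does, via a short convexity/derivative computation on $x\mapsto -p\log x$ type terms), or to prove the invariant directly by induction using Pinsker-type control of $\|Q-P\|_1$ by $\sqrt{D_{KL}(Q\|P)}$. A secondary subtlety is the boundary convention $\lim_{k_i\to 0} k_i\log k_i = 0$ already flagged in the text, which must be invoked to make $D_{KL}(Q_i\|P)$ well-defined when some sequence has been exhausted; I would note that once $m_i$ is empty, user $i$ is simply dropped from the candidate pool so the recursion is over the surviving users only, and the target $P$ should be understood as renormalized accordingly (or, equivalently, the $k_i$'s of exhausted users are frozen and the bound $|k_i(n)-n p_i|<1$ continues to hold because no further error accrues to them).
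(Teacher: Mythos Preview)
Your approach is substantially different from the paper's, and considerably more ambitious. The paper's proof is a short reductio ad absurdum: it supposes the conclusion fails, asserts that this would violate the greedy selection rule~(\ref{optimal_i}), and declares a contradiction, with the only quantitative remark being that ``when the total nonce length approaches infinity, the user's successful mining probability is almost equal to the target probability.'' No prefix bound, no apportionment lemma, and no explicit link between position-in-sequence and mining probability is given; the argument is essentially a restatement of the algorithm's defining property.

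Your two-part decomposition --- (i) a prefix-proportionality bound $|k_i(n)-np_i|<1$ for the KL-greedy schedule, and (ii) a probabilistic translation via the geometric first-success time --- is a genuinely different and more rigorous route. It would actually justify the informal ``nearest to'' language with an explicit error term. You are right that the crux is whether KL-greedy coincides with the largest-deficit rule: writing the one-step objective, the choice of $i$ minimizes $(k_i{+}1)\log(k_i{+}1)-k_i\log k_i-\log M_i$, which for moderate $k_i$ behaves like $\log(k_i/M_i)$ and hence selects the most underserved user, but the small-$k_i$ regime needs the boundary convention you flagged and a short case check. The Pinsker route is less promising, since it controls $\|Q-P\|_1$ by $D_{KL}$ rather than bounding $D_{KL}$ itself along the greedy trajectory; the apportionment-style invariant is the cleaner path. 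In short: your plan is sound and more informative than the paper's argument, with the KL-vs-deficit equivalence being the one step that needs to be nailed down.
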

\begin{proof}\label{ProofSuccessfulProbability}
	We prove this theorem by using reduction to absurdity. We assume that the $i$-th mobile user's probability of successfully mining a new block corresponding to other users, i.e., the probability of the untrusted MEC server providing nonce hash computing service in the merged sequence, is not nearest to the target probability $p_i$ under above proposed ordering algorithm.
	
	Under this assumption, it is easy to obtain that the $i$-th mobile user's nonces in the merged sequence obviously cannot satisfy the nonce selecting rule (\ref{optimal_i}) of our proposed ordering algorithm. Meanwhile, the selecting rule (\ref{optimal_i}) requires that each selected nonce has the smallest KL divergence between the expected winning probability and target probability distribution, namely, the expected winning probability approximates the target probability. When the total nonce length approaches infinity, the user's successful mining probability is almost equal to the target probability. Nevertheless, it directly contradicts our current assumption that the $i$-th mobile user's probability of successfully mining a new block is not nearest to the target probability. Hence, the statement in this theorem is proved.
\end{proof}

\begin{remark}\label{PropertiesTheorem1}
	(Property of Theorem \ref{SuccessfulProbability}): Our proposed nonce ordering algorithm in the untrusted MEC PoW scheme achieves better fairness of computing resource allocation among all mobile IoT users. For any position in the merged sequence of the untrusted MEC server, the number of selected nonces of each original sequence is proportional to its nonce length. 
\end{remark}

Previous researches have the assumption that MEC is a trusted server, resulting in many problems. For example, when multiple users submit mining task requests to a trusted MEC, users will inform the MEC server of all their information. This will offer MEC an opportunity to profit from the usage of this information, and it is likely that the MEC server will allocate more computing resources to some selfish users in order to obtain more revenue. Generally speaking, if the MEC is a trusted server, it may lead to unfair computing resource allocation for mobile users and also increase the possibility of malicious collusion between MEC server and selfish users. Accordingly, our proposed nonce ordering algorithm in the untrusted MEC PoW scheme is more reasonable and implement fairer computing resource distribution of MEC server among all IoT users in the mobile blockchain network.
\section{System Formulation and Analysis}\label{SystemFormulationandAnalysis}
In this section, we first present the non-cooperative game formulation for the mobile users' nonce selection strategy in the untrusted MEC PoW scheme. Then we analyze the NE existence of this game and propose an alternating optimization algorithm for users' nonce selection. We also derive the NE in analytical expression which only needs to know partial information of other users. Furthermore, we analyze that cooperation behavior is unsuitable for the blockchain-enabled IoT devices by using the repeated game. In the end, we design the difficulty adjustment process for the untrusted MEC-aided mobile blockchain network in detail.
\subsection{Non-cooperative Game Formulation}
In the PoW mining process, users compete to be first to solve the PoW puzzle with right nonce and broadcast the block to reach agreement in the whole blockchain network. We describe the size of transaction (TX) of each user by ${\bf{s}} = \left( {{s_1},...,{s_N}} \right)$. The first user which successfully mines a block and achieves agreement can get a reward $F$. The reward is composed of a fixed bonus $B$ for mining a new block and a flexible transaction fee determined by the size of its collected transactions ${\bf{s}}$ and the transaction fee rate $r$. Then mobile user $i$'s expected reward $F_i$ can be expressed by
\begin{equation}\label{userreward}
{F_i} = \left( {B + r{s_i}} \right)P_i^m,\;\text{for}\;i=1,2,...,N,
\end{equation}
where ${P_i^m}$ is the probability that user $i$ receives the reward by contributing a block to the blockchain corresponding to other users. From the nonce ordering algorithm above mentioned, the probability of mining a new block is proportional to user $i$'s nonce length, i.e., $P_i^m = {M_i} \textfractionsolidus \sum\nolimits_{j \in N} {{M_j}}$, for $i=1,2,...,N$. 

The untrusted MEC-aided blockchain network system discussed in this paper has only one MEC server. There is no fork and the orphaning probability will not be considered. While equation (\ref{userreward}) does not reflect the effect of blockchain's difficulty on the mining process. Even though all users' nonces are provided hash computing service by MEC server, the block will not be mined successfully. 
Nonce hash computing is a memoryless searching process, and the searching probability is only related to the difficulty value $D(h)$, regardless of the size of this searching space.
For a given difficulty value $D(h)$, each nonce hash computing is i.i.d Bernoulli trial with a successful probability
\begin{equation}\label{PD}
{P_D} = \frac{1}{{D\left( h \right)}}=2^{-h}.
\end{equation}
With this effect in mind, our equation for the $i$-th user's expected revenues gets discounted by the chances of blockchain's difficulty value, $P_{D}$, becoming
\begin{equation}\label{finalreward}
{F_i} = \left( {B + r{s_i}} \right) {2^{-h}} \frac{{{M_i}}}{{\sum\limits_{j \in \mathcal{N}} {{M_j}} }},\;\text{for}\;i=1,2,...,N.
\end{equation}

Once receiving computing resource demands from all mobile users, MEC server will provide hash computing service for mobile users' nonce sequence, and finally finds the correct nonce for mining. Here, we assume the price of each hash computing for the MEC is fixed by $c$. The users compete with each other to maximize their own utility by choosing their individual nonce computing demand, which forms the non-cooperative game $\mathcal{G} = \left( N, {\mathbf{M}}, {u_i} \left(  \cdot  \right) \right)$ as follows:\\
\textbf{Player:} the $N$ mobile users;\\
\textbf{Actions:} each player selects some nonces and corresponding $i$-th user's nonce length denoted as $M_i$;\\
\textbf{Utility function:} the utility function $u_i$ is denoted as the $i$-th user's revenue.\\
The revenue of $i$-th user is that the achieved rewards minus computing service cost of MEC server. Given the price of once hash computing $c$, the user $i$ determines its nonce hash computing service demand by maximizing the expected utility function which is given as:
\begin{equation}\label{user profit}
{u_i}\left( {{M_i}} \right) = \left( {B + r{s_i}} \right){2^{-h}}\frac{{{M_i}}}{{\sum\limits_{j \in \mathcal{N}} {{M_j}} }} - c{M_i},\;\text{for}\;i = 1,2,...,N.
\end{equation}
\subsection{NE Analysis of the Non-cooperative Game} \label{sec:NEanalysis}
We consider this non-cooperative game Nash equilibrium (NE) as a solution to the users' nonce selection strategies. In this case, the NE is obtained by using the best response function which is the best strategy of one player given other users' strategies. The best response function of $i$-th user's nonce selection, given a vector of strategies offered by other users' nonce selection ${\bf{M}}_{-i}$, is defined as follows:
\begin{equation}\label{bestresponsefunc}
{R_i}\left( {{{\mathbf{M}}_{ - i}}} \right) = \arg \mathop {\max }\limits_{{M_i}} u\left( {{M_i},{{\mathbf{M}}_{ - i}}} \right),
\end{equation}
where ${{\mathbf{M}}_{ - i}} = \left( {{M_1},{M_2},...,{M_{i - 1}},{M_{i + 1}},...,{M_N}} \right)$ for $i=1,2,...,N$. The vector ${{\bf{M}}^*} = {\left( { \cdots {M_i}^* \cdots } \right)}$ denotes a NE of this game on nonce selection for 
\begin{equation}\label{NE}
M_i^* = {R_i}\left( {{\bf{M}}_{ - i}^*} \right),\;\;\;\;\forall i = 1,2,...,N,
\end{equation}
where ${{\bf{M}}_{ - i}^*}$ denotes the vector of best responses for player $j$ for $j \ne i$. The above variable $M_i$ has the non-negative integer constraint, making $u_i$ become a mixed integer programming function. Here, we remove the integer restriction and solve the optimal nonce selection strategy directly, and then get the integer solution by rounding down to nearest integer. We next analyze the existence of NE in the noncooperative game $\left( N, {\mathbf{M}}, {u_i} \left(  \cdot  \right) \right)$.

\begin{theorem}\label{ExistenceNE}
	A NE exists in the noncooperative game $\left( N, {\mathbf{M}}, {u_i} \left(  \cdot  \right) \right)$.
\end{theorem}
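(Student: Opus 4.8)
The plan is to verify the hypotheses of the classical existence theorem for pure-strategy equilibria in concave games (the Debreu--Glicksberg--Fan theorem, equivalently Rosen's theorem): if every player's strategy set is a nonempty, compact, convex subset of a Euclidean space and every payoff $u_i$ is continuous in the joint profile $\mathbf{M}$ and concave in the player's own variable $M_i$, then a pure-strategy NE exists. Since integrality of $M_i$ has already been relaxed (the integer solution being recovered afterwards by rounding down), it suffices to work with real-valued $M_i$, and the only things to check are a suitable compact strategy space, joint continuity, and own-variable concavity of $u_i$.

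First I would pin down the strategy space, which as written, $[0,\infty)^N$, is not compact. Writing $S_{-i} = \sum_{j \neq i} M_j$, a direct computation from (\ref{user profit}) gives
\begin{equation*}
\frac{\partial u_i}{\partial M_i} = (B + r s_i)\, 2^{-h}\, \frac{S_{-i}}{(M_i + S_{-i})^2} - c,
\qquad
\frac{\partial^2 u_i}{\partial M_i^2} = - 2 (B + r s_i)\, 2^{-h}\, \frac{S_{-i}}{(M_i + S_{-i})^3} \le 0 .
\end{equation*}
Using $(M_i + S_{-i})^2 \ge 4 M_i S_{-i}$ one obtains $\partial u_i / \partial M_i \le (B+rs_i)2^{-h}/(4M_i) - c$, which is negative as soon as $M_i > \bar M_i := (B + r s_i) 2^{-h}/(4c)$; hence no best response ever exceeds $\bar M_i$, whatever $\mathbf{M}_{-i}$ is. Recalling that each user submits at least one nonce, I would therefore take the compact convex interval $\mathcal{M}_i = [\,1,\ \max\{1,\bar M_i\}\,]$ as player $i$'s strategy set; restricting to $\prod_i \mathcal{M}_i$ alters neither the best-response correspondence (\ref{bestresponsefunc}) nor the equilibrium set (\ref{NE}).

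With this restriction the shared denominator satisfies $\sum_{j\in\mathcal N} M_j \ge N$, bounded away from zero, so $u_i$ is continuous on $\prod_j \mathcal{M}_j$; and the second-derivative inequality above shows that $u_i(\cdot,\mathbf{M}_{-i})$ is concave on $\mathcal{M}_i$ for every fixed $\mathbf{M}_{-i}$. All hypotheses of the Debreu--Glicksberg--Fan theorem are then met, so the game $\left(N,\mathbf M,u_i(\cdot)\right)$ admits a pure-strategy NE $\mathbf{M}^*$, and rounding each $M_i^*$ down to the nearest integer yields the desired profile.

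I expect the only genuine obstacles to be the two degeneracies of the raw formulation, both dealt with above: the non-compactness of $[0,\infty)^N$, which is removed by the uniform upper bound $\bar M_i$ extracted from the marginal payoff, and the discontinuity of $u_i$ at the all-zero profile, where $M_i/\sum_j M_j$ is a $0/0$ form — this is excluded by the modeling assumption that every user offers at least one nonce, which keeps the denominator strictly positive (and, for $N\ge 2$, makes $u_i$ in fact strictly concave in $M_i$). Everything else is the short derivative computation displayed above, so no further heavy calculation is needed.
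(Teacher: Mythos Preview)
Your proposal is correct and follows essentially the same route as the paper: compute $\partial^2 u_i/\partial M_i^2 < 0$ to establish own-variable concavity and then invoke a standard concave-game existence theorem (the paper cites \cite{han2012game}, Theorem~3.2). The only difference is that the paper takes the natural compact interval $[0,2^L-1]$ as each player's strategy set (nonces lie in this range by construction), so your derivation of the endogenous bound $\bar M_i$ is unnecessary here---though your attention to the $0/0$ singularity at the all-zero profile is more careful than the paper's own argument.
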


\begin{proof}\label{ProofExistenceNE}
	First of all, the strategy space for each user is defined to be $[0,{2^L-1}]$, which is a non-empty, convex, compact subset of the Euclidean space. From (\ref{user profit}), it is easier to treat discrete functions $u_i$ as continuous functions in $[0,{2^L-1}]$. Then we take the first order and second order derivatives of (\ref{user profit}) with $M_i$ to prove its concavity, which can be written as follows
	\begin{equation}\label{firstorder}
	\frac{{\partial {u_i}\left( {M_i} \right)}}{{\partial {M_i}}} = \left( {B + r{s_i}} \right){2^{-h}}\frac{{\sum {{M_{ - i}}} }}{{{{\left( {{M_i} + \sum {{M_{ - i}}} } \right)}^2}}} - c,
	\end{equation}
	\begin{equation}\label{secondorder}
	\frac{{\partial {u_i}^2\left( {M_i} \right)}}{{\partial {M_i}^2}} = \left( {B + r{s_i}} \right){2^{-h}}\frac{{ - 2\sum {{M_{ - i}}} }}{{{{\left( {{M_i} + \sum {{M_{ - i}}} } \right)}^3}}} < 0,
	\end{equation}
	for $i=1,2,...,N$, and where $\frac{{ - 2\sum {{M_{ - i}}} }}{{{{\left( {{M_i} + \sum {{M_{ - i}}} } \right)}^3}}} < 0$. Hence we have proved that $u_i$ is strictly concave with respect to $M_i$. Accordingly, the NE exists (see \cite{han2012game}-\textit{Theorem} 3.2) in the non-cooperative game $\left( N, {\mathbf{M}}, {u_i} \left(  \cdot  \right) \right)$. The proof is now completed. 
\end{proof}

Mathematically, to obtain the NE,  we have to address the following set of equations: 
\begin{equation}\label{solvesetequa1}
\frac{{\partial {u_i}\left( {\mathbf{M}} \right)}}{{\partial {M_i}}} = \left( {B + r{s_i}} \right){2^{-h}}\frac{{\sum {{M_{ - i}}} }}{{{{\left( {{M_i} + \sum {{M_{ - i}}} } \right)}^2}}} - c = 0,
\end{equation}
for all $i = 1,2,...,N$. Then we can obtain the best response function of user $i$ by solving (\ref{solvesetequa1})
\begin{equation}\label{solvesetequa2}
M_i^* = \sqrt {\frac{{\sum {{M_{ - i}}} \left( {B + r{s_i}} \right)}}{{c \cdot {2^h}}}}  - \sum {{M_{ - i}}},\;\text{for}\;i = 1,2,...,N.
\end{equation}
The solution $M_i^*$, which is a NE, can be obtained by solving the above set of linear equations by using a numerical method when all the parameters in (\ref{solvesetequa2}) are available.
We see other users' nonce selection strategies as an integrated part. Assume other parameters such as $B,r,s_i,c,h$ are fixed, then we get the first order and second order derivatives of (\ref{solvesetequa2}) with $\sum {{M_{ - i}}}$, which can be written as follows
\begin{equation}\label{Mifirstorder}
\frac{{\partial M_i^*}}{{\partial \left( {\sum {{M_{ - i}}} } \right)}} = \frac{1}{2}\sqrt {\frac{{\left( {B + r{s_i}} \right)}}{{c \cdot {2^h}}}} {\left( {\sum {{M_{ - i}}} } \right)^{ - \frac{1}{2}}} - 1,\;\text{for}\;i = 1,2,...,N,
\end{equation}
and
\begin{equation}\label{Misecondorder}
\frac{{{\partial ^2}M_i^*}}{{\partial {{\left( {\sum {{M_{ - i}}} } \right)}^2}}} =  - \frac{1}{4}\sqrt {\frac{{\left( {B + r{s_i}} \right)}}{{c \cdot {2^h}}}} {\left( {\sum {{M_{ - i}}} } \right)^{ - \frac{3}{2}}} < 0,\;\text{for}\;i = 1,2,...,N.
\end{equation}
Hence we easily obtain that the $i$-th user's best response $M_i^*$ is strictly concave with respect to other users' nonce selection strategies. In the first place, $M_i^*$ increases with the increment of $\sum {{M_{ - i}}}$, and then reaching an optimal point, it decreases when $\sum {{M_{ - i}}}$ increases. This conclusion is also consistent with the subsequent simulation result of Fig.~\ref{fig:bestresponse}. At last, combining the above analysis part, the optimal users' nonce selection under the untrusted MEC PoW scheme can be obtained, as summarized in Algorithm \ref{alternatingalgorithm1}.
\begin{algorithm}[!t]
	\caption{Alternating Optimization Algorithm of Users' Nonce Selection} \label{alternatingalgorithm1}
	\setcounter{algorithm}{2}
	\begin{algorithmic}[1]
		\STATE {\textbf{Initialization:} input data $\left( {B,r,{s_i},c,L,h} \right)$, set $i=1$ and choose ${M_{ - 1}} \in \left[ {0,{2^L-1}} \right]$.}
		\REPEAT 
		\STATE {Fix ${{\bf{M}}_{ - i}} = \left( {{M_1},{M_2},...,{M_{i - 1}},{M_{i + 1}},...,{M_N}} \right)$, and calculate $M_i^*$ using (\ref{solvesetequa2}).}
		\STATE {$i \leftarrow i + 1$}
		\STATE {Then update the set ${{\bf{M}}_{ - i}}=$$ ( {{M_1},{M_2},...,}$${M_{i - 1}^*},$ ${M_{i + 1}},...,$${M_N} )$.}
		\UNTIL {the optimal nonce vector  ${\bf{M}^*} = \left\{ {M_1^*,M_2^*,...,M_N^*} \right\}$ is obtained.}
	\end{algorithmic}
\end{algorithm}
From (\ref{solvesetequa2}),  we can see that the selected nonce numbers of all other users are known and it is apparently difficult to apply in the actual system. For this reason, the best response function can be rewritten in the analytical expression of \textit{Theorem}~\ref{finalNE}.
\begin{theorem}\label{finalNE}
	The NE for user $i$	in the noncooperative game (N,\textbf{M},${u_i}\left( . \right)$) is given by
	\begin{equation}\label{finalNE1}
	M_i^* = \frac{{N - 1}}{{\sum\limits_{j \in \mathcal{N}} {\frac{{c \cdot {2^h}}}{{B + r{s_j}}}} }} - {\left( {\frac{{N - 1}}{{\sum\limits_{j \in \mathcal{N}} {\frac{{c \cdot {2^h}}}{{B + r{s_j}}}} }}} \right)^2}\frac{{c \cdot {2^h}}}{{B + r{s_i}}},\;\text{for}\;i=1,2,...,N. 
	\end{equation}
\end{theorem}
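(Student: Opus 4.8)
The plan is to solve the full system of first-order stationarity conditions (\ref{solvesetequa1}) simultaneously, rather than running the iterative scheme of Algorithm~\ref{alternatingalgorithm1}, by exploiting the fact that the $i$-th equation couples $M_i$ to the remaining players only through the aggregate nonce length. First I would denote by $S=\sum_{j\in\mathcal N}M_j^*$ the total nonce length at equilibrium, so that $\sum M_{-i}=S-M_i^*$ while $M_i^*+\sum M_{-i}=S$. Substituting these identities into (\ref{solvesetequa1}) collapses the $i$-th condition to the scalar relation $(B+rs_i)\,2^{-h}\,(S-M_i^*)=c\,S^2$, i.e.
\[
M_i^*=S-\frac{c\cdot 2^h}{B+rs_i}\,S^2,\qquad i=1,2,\dots,N.
\]

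Next I would pin down $S$ self-consistently. Summing the last display over $i\in\mathcal N$ and using $\sum_i M_i^*=S$ yields $S=NS-c\cdot 2^h\,S^2\sum_{j\in\mathcal N}\frac{1}{B+rs_j}$. Discarding the degenerate root $S=0$ — which corresponds to nobody offloading any nonce and is ruled out because, by (\ref{firstorder}), $\partial u_i/\partial M_i>0$ near $M_i=0$ whenever $\sum M_{-i}>0$, so the NE guaranteed by Theorem~\ref{ExistenceNE} is interior — I can divide by $S$ and solve the remaining linear equation, obtaining
\[
S=\frac{N-1}{\sum_{j\in\mathcal N}\dfrac{c\cdot 2^h}{B+rs_j}}.
\]
Plugging this value of $S$ back into $M_i^*=S-\frac{c\cdot 2^h}{B+rs_i}S^2$ gives precisely (\ref{finalNE1}), which completes the derivation.

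Finally I would check consistency and admissibility: the tuple $(M_1^*,\dots,M_N^*)$ produced this way satisfies every stationarity equation by construction, and since each $u_i$ is strictly concave in $M_i$ (established in the proof of Theorem~\ref{ExistenceNE}), this stationary profile is the unique mutual best response and hence the NE. I expect the main obstacle to be not the algebra but the bookkeeping around non-degeneracy and feasibility: one must justify selecting the positive root $S>0$ (which also needs $N\ge 2$), and verify that the closed form lands in the admissible strategy set $[0,2^L-1]$ — equivalently that $0\le\frac{c\cdot 2^h}{B+rs_i}\,S\le 1$ in the parameter regimes of interest — and, for an airtight uniqueness claim, rule out boundary equilibria via the sign analysis of (\ref{firstorder}).
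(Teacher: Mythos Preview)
Your proposal is correct and follows essentially the same route as the paper: both introduce the aggregate $S=\sum_j M_j^*$, sum the first-order conditions (\ref{solvesetequa1}) over all players to obtain $S=(N-1)\big/\sum_{j}\frac{c\cdot 2^h}{B+rs_j}$, and then back-substitute to isolate $M_i^*$. The only cosmetic difference is that the paper substitutes $S$ into the square-root best-response form (\ref{solvesetequa2}) and squares, whereas you substitute directly into the rearranged first-order condition $M_i^*=S-\frac{c\cdot 2^h}{B+rs_i}S^2$; your added remarks on discarding the degenerate root $S=0$, requiring $N\ge 2$, and checking feasibility are bookkeeping the paper omits but do not change the argument.
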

\begin{proof}\label{prooffinalNE}
	According to (\ref{solvesetequa1}), for each user $i$, we have the mathematical expression
	\begin{equation}\label{prooffinalNE1}
	\frac{{\sum {{M_{ - i}}} }}{{{{\left( {{M_i} + \sum {{M_{ - i}}} } \right)}^2}}} = \frac{{c \cdot {2^h}}}{{B + r{s_i}}},\;\text{for}\;i=1,2,...,N.
	\end{equation}
	Then we calculate the summation of this expression (\ref{prooffinalNE1}) for all users as follows
	\begin{equation}\label{prooffinalNE2}
	\frac{{N - 1}}{{{M_i} + \sum {{M_{ - i}}} }} = \sum\limits_{j \in \mathcal{N}} {\frac{{c \cdot {2^h}}}{{B + r{s_j}}}},\;\text{for}\;i=1,2,...,N,
	\end{equation}
	we can obtain
	\begin{equation}\label{prooffinalNE3}
	{M_i} + \sum {{M_{ - i}}}  = \frac{{N - 1}}{{\sum\limits_{j \in \mathcal{N}} {\frac{{c \cdot {2^h}}}{{B + r{s_j}}}} }},\;\text{for}\;i=1,2,...,N.
	\end{equation}
	By substituting (\ref{prooffinalNE3}) into (\ref{solvesetequa2}), we can have
	\begin{equation}\label{prooffinalNE4}
	\frac{{N - 1}}{{\sum\limits_{j \in \mathcal{N}} {\frac{{c \cdot {2^h}}}{{B + r{s_j}}}} }} = \sqrt {\frac{{B + r{s_i}}}{{c \cdot {2^h}}}\left( {\frac{{N - 1}}{{\sum\limits_{j \in \mathcal{N}} {\frac{c \cdot {2^h}}{{B + r{s_j}}}} }} - {M_i}} \right)}, \;\text{for}\;i = 1,2,...,N.
	\end{equation}
	After squaring both sides and simple transformations, we can obtain the NE for user $i$ as shown in (\ref{finalNE1}).
\end{proof}
\begin{figure*}[!t]
	\centering 
	\subfigure[Fixed reward is changing]{ 
		\label{fig:subfig:c} 
		\includegraphics[width=0.478\linewidth]{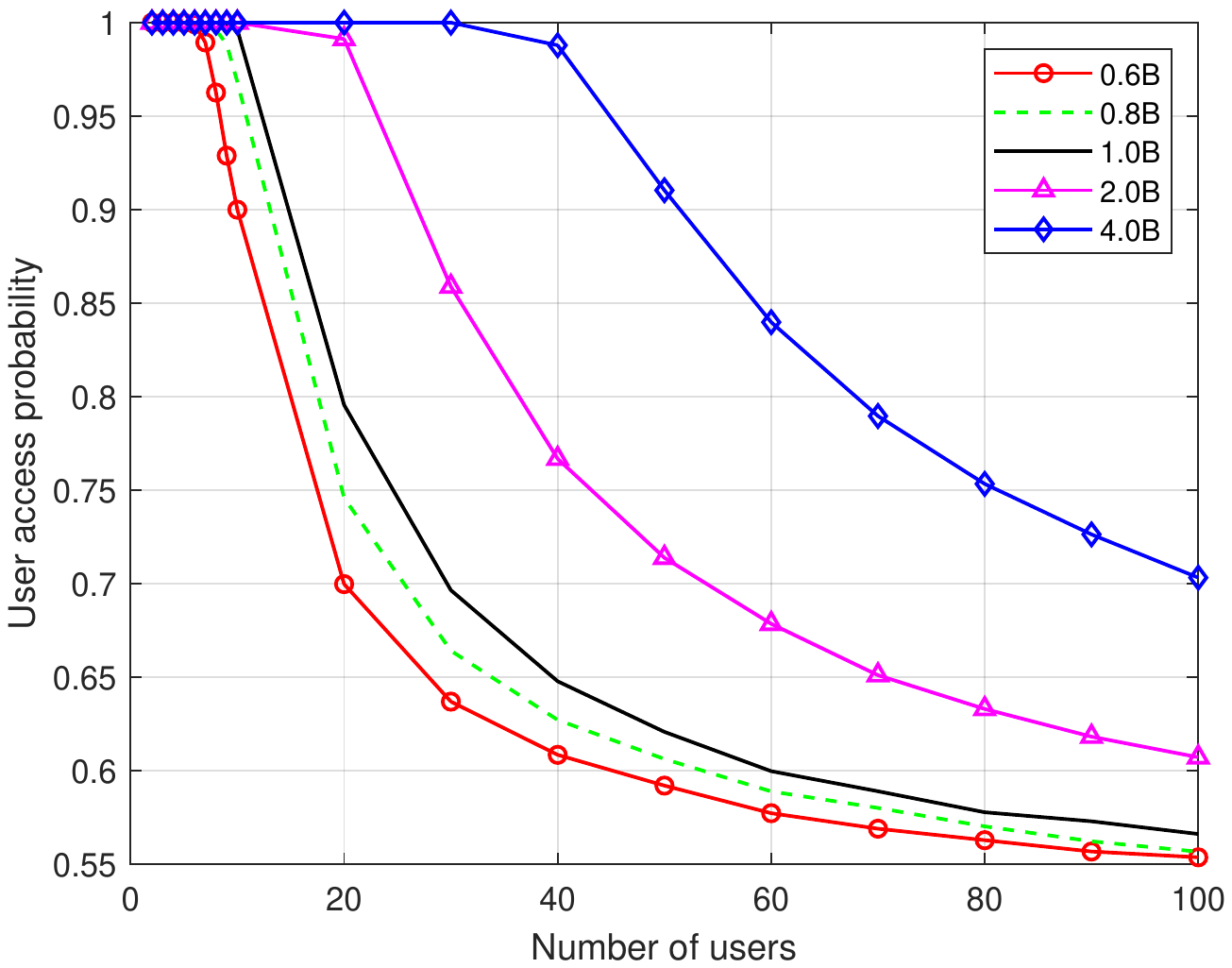}
	} 
	\subfigure[Difficulty factor is changing]{ 
		\label{fig:subfig:d} 
		\includegraphics[width=0.478\linewidth]{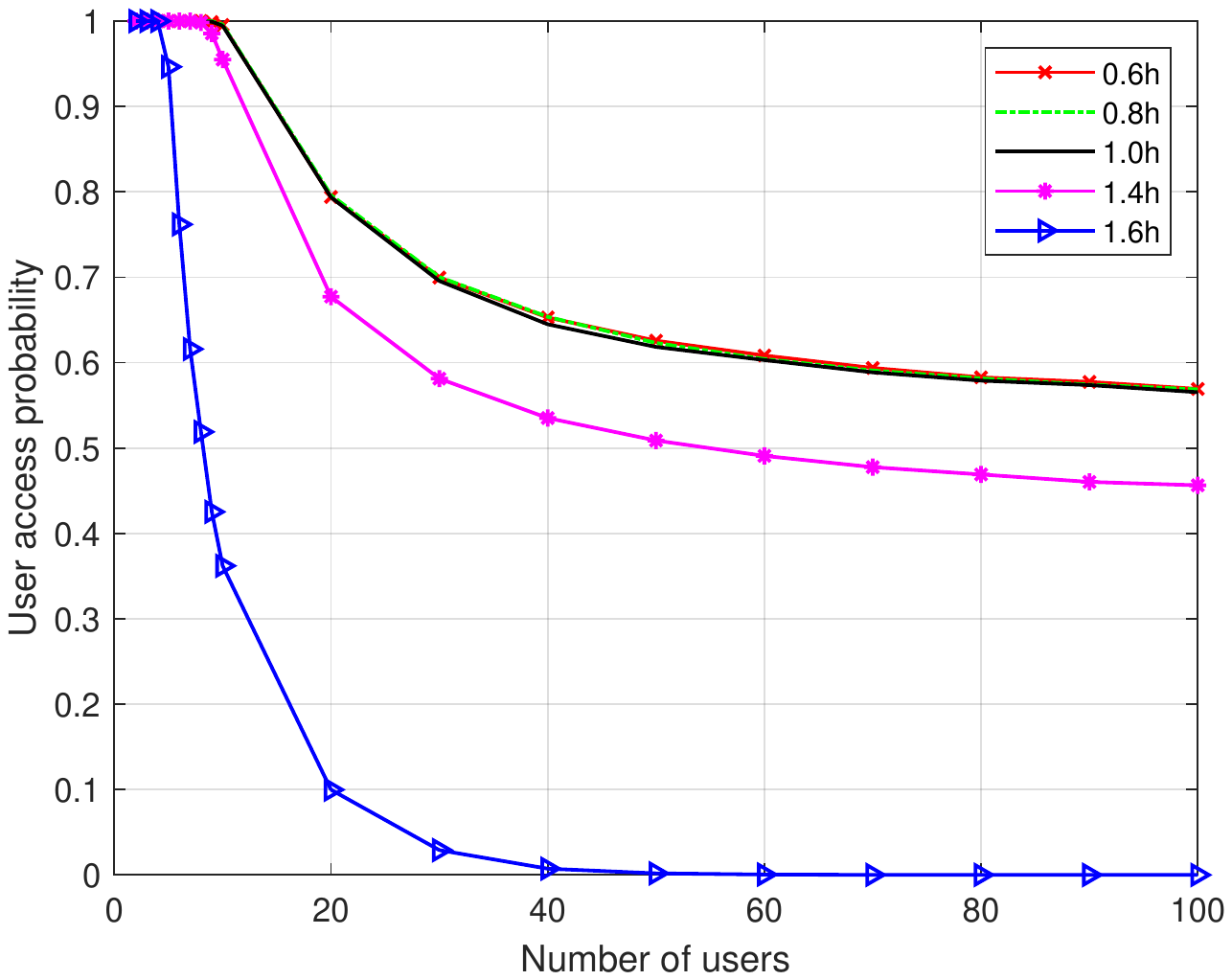}
	} 
	\caption{User access probability versus the number of users. $B=10^4$, $r=2$, $c=0.001$ and $h=12$.} 
	\label{fig:subfig2} 
\end{figure*}
\begin{remark}\label{RemarkFinalNE}
	(Properties of Theorem \ref{finalNE}): The analytical expression (\ref{finalNE1}) is distinct from (\ref{solvesetequa2}), which needs to know the global information about other users. But the analytical expression (\ref{finalNE1}) only has to know the partial information such as the block size and the number of other users. 
\end{remark}

For ease of analysis, the above $M_i$ is considered to be a continuous variable value. Whereas the $i$-th user's optimal nonce length $M_i^*$ is non-negative integer in real system. Then the formula (\ref{finalNE1}) is rounded down to obtain
\begin{equation}\label{RoundedDownM_i}
M_i^* = \left\lfloor {\frac{{N - 1}}{{\sum\limits_{j \in \mathcal{N}} {\frac{{c \cdot {2^h}}}{{B + r{s_j}}}} }} - {{\left( {\frac{{N - 1}}{{\sum\limits_{j \in \mathcal{N}} {\frac{{c \cdot {2^h}}}{{B + r{s_j}}}} }}} \right)}^2}\frac{{c \cdot {2^h}}}{{B + r{s_i}}}} \right\rfloor. 
\end{equation}

\begin{remark}\label{RemarkRoundedDownM_i}
	(Interpretation of formula (\ref{RoundedDownM_i})): From  (\ref{RoundedDownM_i}), $M_i^*$ is related to some parameters of this mobile blockchain system, such as $N$, $B$, $h$ and so on. From the perspective of analyzing $M_i^*$, the following cases will occur in the untrusted MEC PoW scheme.\\
	\textbf{Case 1:} When $M_i^* \le 0$ for no one user. We can see that $N$ users all participate in the mobile blockchain network for mining block, i.e., the number of users $N$ is unchanged, and the values of the system parameters are all appropriate in this case. \\
	\textbf{Case 2:} When $M_i^* \le 0$ for part users. Some users will quit the mobile blockchain system, because the system parameters are inappropriate. From the formula (\ref{RoundedDownM_i}), we can see that possible reasons result in quitting system for some users, for instance, the total number of user is too large, the mining revenue of user is too small, or the blockchain difficulty factor is too large. Once system parameters $N, b, r, h, c$ are fixed, the users having larger block sizes will be reserved, and the users having smaller block sizes will be exited. After some users exit the mobile blockchain system, we still need to continue to calculate the optimal nonce selection strategies for the remaining users. \\
	\textbf{Case 3:} When $M_i^* \le 0$ for all $N$ users. All users exit from the mobile blockchain system for mining, leading to the entire system crashing dramatically. The system parameters are unsuitable exceedingly, such as the total number of user too much large, the mining revenue of user too much small, the blockchain difficulty factor too much large. In this case, an incentive mechanism should be designed to activate the system.
\end{remark}

To demonstrate the effect of system parameters on user access for three cases, we compare the user access probabilities for different users by changing the fixed reward for successfully mining $B$ and the difficulty factor $h$ in Fig.~\ref{fig:subfig:c} and Fig.~\ref{fig:subfig:d} respectively. From Fig.~\ref{fig:subfig2}, we can see that when $N$ is small, the all users can access the mobile blockchain system to participate in mining block (\textbf{Case 1}); when $N$ is large, some users will quit our system (\textbf{Case 2 or Case 3}). The reasons are given as follows. As observed from Fig.~\ref{fig:subfig:c}, the user access probability decreases with the declining fixed reward $B$. This is because as $B$ reduces, each user can get less revenue, and thus some users decide to quit system. From Fig.~\ref{fig:subfig:d}, when $h$ increases, the user access probability decreases, for example, when $h$ increases by $60\%$ and $N>40$, the user access probability drops to $0$, i.e., the mobile blockchain system crashes dramatically. This is because as $h$ increases, the difficulty of mining block in the mobile blockchain network increases, and thus more users cannot deal with PoW puzzle and then quit system. When $N$ is large, or $B$ is small, or $h$ is large, the mobile blockchain system is highly competitive and less benefit, and thus the user access probability will be low and even the entire system will crash. 
	
In this paper, we mainly investigate the \textbf{Case 1}, where we analyze the optimal number of nonces selected by each user and the specific mathematical relationship with all system parameters when the system parameters are fixed. Some possible problems in the other two cases, such as how to dynamically adjust the system parameters, and how to design the incentive mechanism for solving the problem of system crashing, are not analyze theoretically in this paper. These may be important research topics in the future.
	
Here, we will consider a special case, where we assume to know the average block size of all mobile users and let $\bar s$ denotes the fixed average block size. Then we can rewrite the expression (\ref{RoundedDownM_i} ) simply as
\begin{equation}\label{FixedAverageBlockSize}
M_i^* = \left\lfloor {\frac{{N - 1}}{N}\frac{{B + r\bar s}}{{c \cdot {2^h}}}\left( {1 - \frac{{N - 1}}{N}} \right)} \right\rfloor,
\end{equation}
for $i=1,2,...,N$. From expression (\ref{FixedAverageBlockSize}), we can achieve that user's nonce selection strategy only depends on the total number of users $N$, when others parameters $B,r,c,h$ are fixed in the whole untrusted MEC-aided mobile blockchain network. Moreover, we can obtain that the $i$-th user's nonce selection strategy approaches to $0$ when $N$ is sufficiently large. In this case, all players are trying to select nonces as much as possible, severe competition for a large number of users often leads to low computing resource allocation. Since the untrusted MEC server coexists over a long period of time, the user's nonce selection game will be played for multiple times, where the unfair competition could be handled through mutual trust and cooperation. By repeating a game over multiple time, the players can be conscious of the previous behavior of the players and change their strategies accordingly. Next, we use the repeated game to boost cooperation among competitive players.
\subsection{Analysis by Using Repeated Game}\label{RepeatedGame}
Let ${\cal T}$ be number of stage of the game $\mathcal{G}$. The repeated game, denoted by $\mathcal{G}^{\cal T}$, consists of game $\mathcal{G}$ repeated for ${\cal T}+1$ time slots form $t=0$ until $t={\cal T}$. First of all,  we have to define the utility for a player $i$ in the finitely repeated game (FRG) (${\cal T} < \infty $) is expressed as
\begin{equation}\label{RepeatedGameUtility}
\begin{gathered}
U_i = \sum\limits_{t = 0}^{\cal T} {{\delta ^t}{u_i}\left[ t \right]}  \hfill \\
\;\;\;\;\; = \sum\limits_{t = 0}^{\cal T} {{\delta ^t}} \left( {\left( {B + r{s_i}} \right){2^{ - h}}\frac{{{M_i}\left( t \right)}}{{\sum\limits_{j \in \mathcal{N}} {{M_j}\left( t \right)} }} - c{M_i}\left( t \right)} \right) \; \forall i \in {\cal N}, \hfill \\ 
\end{gathered} 
\end{equation}
where $\delta  \in \left[ {0,1} \right)$ is a discount factor, and ${u_i}\left[ t \right]$ is player $i$'s payoff at the $t$-th time slot. When $\delta$ is closer to 1, the player is more patient.

\begin{theorem}\label{FRG}
	Consider FRG $\mathcal{G}^{\cal T}$, and the one-shot stage game $\mathcal{G}$ has a unique nonce selection strategy $\bf{M}^*$. Then $\mathcal{G}^{\cal T}$ has a unique subgame perfect equilibrium (SPE). In this unique SPE, there is ${\bf{M}}^t=\bf{M}^*$ for each $t=0,1,..,{\cal T}$ regardless of history.
\end{theorem}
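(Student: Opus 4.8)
The plan is to argue by backward induction on the stages $t = \mathcal{T}, \mathcal{T}-1, \dots, 0$, which is the standard technique for finitely repeated games. The key structural observation is that the total discounted payoff $U_i$ in \eqref{RepeatedGameUtility} decomposes additively across stages, and — crucially — the stage payoff $u_i[t]$ depends only on the nonce vector $\mathbf{M}(t)$ chosen at that stage, not on the history. Consequently, the continuation payoff from any stage onward is independent of what happened before that stage, which is precisely the hypothesis that makes backward induction yield a history-independent equilibrium.

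First I would establish the base case: in the last stage $t = \mathcal{T}$, each player faces exactly the one-shot game $\mathcal{G}$ (scaled by the constant $\delta^{\mathcal{T}}$, which does not affect the argmax). By \emph{Theorem}~\ref{ExistenceNE} a NE exists, and by \emph{Theorem}~\ref{finalNE} together with the strict concavity of $u_i$ in $M_i$ shown in \eqref{secondorder} the best-response system \eqref{solvesetequa1} has the unique solution $\mathbf{M}^*$; hence the unique Nash outcome of the last-stage subgame is $\mathbf{M}^{\mathcal{T}} = \mathbf{M}^*$, regardless of the history of play leading into stage $\mathcal{T}$. Then I would run the inductive step: suppose that in every subgame starting at stage $t+1$ the unique SPE prescribes $\mathbf{M}^{s} = \mathbf{M}^*$ for all $s = t+1, \dots, \mathcal{T}$. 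Since this continuation play — and hence the continuation payoff $\sum_{s=t+1}^{\mathcal{T}} \delta^{s} u_i(\mathbf{M}^*)$ — is a constant independent of the action taken at stage $t$, player $i$'s optimization at stage $t$ reduces to maximizing $\delta^{t} u_i(M_i(t), \mathbf{M}_{-i}(t))$, i.e. exactly the one-shot game $\mathcal{G}$ again. Its unique equilibrium is $\mathbf{M}(t) = \mathbf{M}^*$, and uniqueness of the NE of $\mathcal{G}$ (the theorem's hypothesis) forces this to be the unique choice in the SPE. This closes the induction and shows $\mathbf{M}^{t} = \mathbf{M}^*$ for every $t$, giving existence and uniqueness of the SPE simultaneously.

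The main obstacle is not the recursion itself but making the "continuation payoff is history-independent'' claim airtight: one must verify that $u_i[t]$ as written in \eqref{RepeatedGameUtility} genuinely has no dependence on past nonce choices (no carried-over state, no budget coupling across rounds), so that the subgame rooted at stage $t$ really is a copy of $\mathcal{G}$ up to the harmless multiplicative constant $\delta^{t}$. Granting that, I would also note the standard caveat that backward induction delivers uniqueness of the SPE precisely because the stage game $\mathcal{G}$ has a \emph{unique} NE — if $\mathcal{G}$ had multiple equilibria, many non-stationary SPE could be supported, so this hypothesis is doing essential work and should be flagged. A final remark worth including for the paper's narrative: since the unique SPE simply replays the one-shot NE in every period and never uses history, no cooperative (collusive) outcome can be sustained in the finitely repeated game, which is exactly the "cooperation is unsuitable'' conclusion the authors want.
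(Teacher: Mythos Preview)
Your proposal is correct and follows the same backward-induction approach as the paper's own proof, which simply argues that the last stage must play $\mathbf{M}^*$ and then steps backward through stages $\mathcal{T}-1,\mathcal{T}-2,\dots$ to conclude $\mathbf{M}^t=\mathbf{M}^*$ throughout. Your version is more detailed---you make explicit the additive separability of $U_i$, the history-independence of continuation payoffs, and the role of the uniqueness hypothesis---but the underlying argument is identical.
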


\begin{proof}\label{proofFRG}
	By using backward induction, in the last stage ${\cal T}$, we will have ${\bf{M}}^{\cal T}=\bf{M}^*$ regardless of the history of the FRG. Then we can inversely move to the previous stage ${\cal T}-1$, the sub-game also has ${\bf{M}}^{{\cal T}-1}=\bf{M}^*$. With this argument and continuing inductively, we can know that in this repeated game $\mathcal{G}^{\cal T}$ of complete information, there exists ${\bf{M}}^t=\bf{M}^*$ for $t=0,1,..,{\cal T}$ regardless of history.
\end{proof}

Thus, FRG is impossible to bring cooperative behavior among players, because we expect the emergence of cooperative behavior in the infinitely repeated games (IRG), denoted by ${\mathcal{G}^\infty }$. Then we define the utility for a player $i$ in the IRG (${\cal T} = \infty $) is calculated as
\begin{equation}\label{IRG}
\begin{gathered}
{U_i} = \left( {1 - \delta } \right)\sum\limits_{t = 0}^{ + \infty } {{\delta ^t}{u_i}\left[ t \right]}  \hfill \\
\;\;\;\;\;= \left( {1 - \delta } \right)\sum\limits_{t = 0}^{ + \infty } {{\delta ^t}\left( {\left( {B + r{s_i}} \right){2^{ - h}}\frac{{{M_i}\left( t \right)}}{{\sum\limits_{j \in \mathcal{N}} {{M_j}\left( t \right)} }} - c{M_i}\left( t \right)} \right)},\; \forall i \in {\cal N},  \hfill \\ 
\end{gathered} 
\end{equation}
where the factor ${1 - \delta }$ is introduced as a normalization, to measure stage and repeated game utilities in the same units. In this IRG, we consider a cooperation behavior to maximize the total utility function for all players which is given by
\begin{equation}\label{TotalUtility}
\begin{gathered}
\mathop {\max }\limits_{{M_1},{M_2},...,{M_N}} {u^{\text{total}}} = \sum\limits_{i = 1}^N {\left( {\left( {B + r{s_i}} \right){2^{ - h}}\frac{{{M_i}}}{{\sum\limits_{j \in \mathcal{N}} {{M_j}} }} - c{M_i}} \right)}  \hfill \\
s.t.\;\;{M_i} > 0,\;\text{for}\;i = 1,2,...,N. \hfill \\ 
\end{gathered} 
\end{equation}
\begin{figure}[!t]
	\centering
	\includegraphics[width=0.6\linewidth]{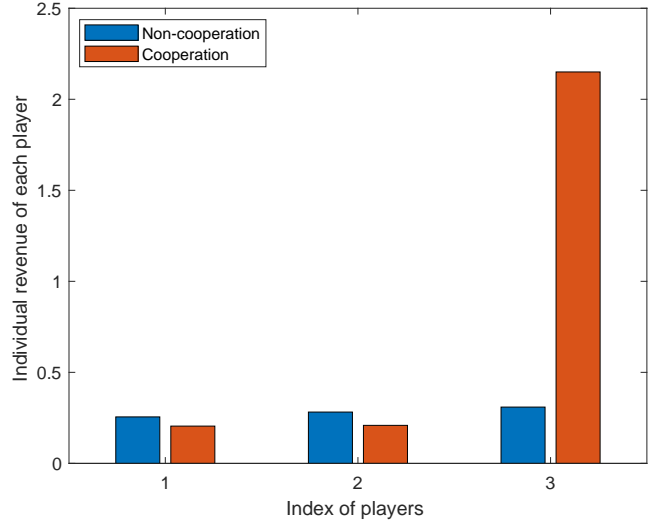}
	\caption{Comparison of individual revenue when players take non-cooperative and cooperative behaviors. $N=3$, $B = {10^4}$, $r = 2$, $c = 0.001$, $s_1=100$, $s_2=200$, $s_3=300$ and $h = 12$.}
	\label{fig:cooperationutility}
\end{figure}
The above optimization problem (\ref{TotalUtility}) is multiple variables non-convex problem. This optimization problem is more difficult to solve. Here, we only need simple analysis, therefore, there is no request to address this problem. Here, we simulate individual revenues of three player in the untrusted MEC-aided mobile blockchain network by using the cooperative or non-cooperative behaviors, as shown in Fig.~\ref{fig:cooperationutility}, where the parameter settings are consistent with the later simulation Section \ref{sec:SimulationResults}. Firstly, we can see that the total revenue of the cooperative behavior in IRG is much higher than that of the non-cooperative behavior in one-shot game. While the individual revenues of player 1 and player 2 according to the cooperation in repeated games do not achieve a higher revenue than using the one-shot non-cooperative game. Consequently, these players will not want cooperation with other players in the repeated game. By cooperation, the individual revenues of each player do not increase together, therefore, the repeated game is not appropriate to analyze user's nonce selection in the mobile blockchain network. Moreover, if the block size $s_i$ for all players is the same, i.e., $s_1=s_2=...=s_N$, the optimization problem (\ref{TotalUtility}) will be a linear problem. In this case, all user's nonce selection strategies are approximately equal to $0$, which does not accord with the actual situation. This further illustrates that cooperation approach of the repeated game is not feasible for IoT users in blockchain networks. 
\subsection{Blockchain's Difficulty Adjustment}\label{BloackchainDifficultyAdjustment}
Too high or too low difficulty can bring about different mining time. The blockchain network desires to ensure a stable average block time over long periods of time. For the sake of making the blockchain network run smoothly, the design of blockchain's difficulty adjustment is necessary. Next, we will focus more on blockchain's difficulty adjustment mechanism. We assume that the time for once nonce hash computing is $t_0$ and the total nonce number denotes as $M=M_1+M_2+...+M_N$ for a round of interaction between MEC server and mobile users. From the above analysis for user's nonce selection strategies in non-cooperative game $\mathcal{G}$, we can obtain the total expected optimal nonce number by
\begin{equation}\label{TotalNonce}
M=\frac{{N - 1}}{{\sum\limits_{j \in \mathcal{N}} {\frac{{c \cdot {2^h}}}{{B + r{s_j}}}} }}.	
\end{equation}
In one interaction, the longest time during MEC providing computing service to users process is $t_0M$. Each nonce hash computing is i.i.d. Bernoulli trial with the same successful probability, i.e., $p=P_D$, so the expected time from starting nonce hash computing to the successful mining can yield 
\begin{equation}\label{Mean}
\begin{gathered}
E\left( p \right) = \sum\limits_{k = 1}^{ + \infty } {{t_0}k{{\left( {1 - p} \right)}^{k - 1}}p}  \hfill \\
\;\;\;\;\;\;\;\;\; = \mathop {\lim }\limits_{K \to  + \infty } {t_0}\left( {\frac{{1 - {{\left( {1 - p} \right)}^K}}}{p} - K{{\left( {1 - p} \right)}^K}} \right) \hfill \\
\;\;\;\;\;\;\;\;\;= \frac{{{t_0}}}{p}, \hfill \\ 
\end{gathered} 
\end{equation}
where $k$ is the number of nonces, and $K$ denotes the upper bound of its number. Hence users may need to selected nonces according to multiple rounds of interaction between MEC server and users, so that block will be mined successfully. Let $\mathcal{R}$ is the interaction rounds of mining a new block between MEC server and users, and $\mathcal{R}$ should satisfy the following inequality,  
\begin{equation}\label{Rinequality}
{t_0}M\mathcal{R} \ge \frac{{{t_0}}}{p}, \hfill 
\end{equation}
Then we obtain
\begin{equation}\label{Rmin}
{\mathcal{R}_{\min }} = \frac{{\sum\limits_{j \in \mathcal{N}} {\frac{{c \cdot {4^h}}}{{B + r{s_j}}}} }}{{N - 1}}, 
\end{equation}
where $\mathcal{R}_{\min}$ is the minimum interaction rounds between MEC server and users for mining a new block. Here, we consider that a round time including the hash computing time and transmission time between MEC and users are almost the same.
It is assumed that a round time $\beta$ is fixed for the interaction process between users selecting nonces and MEC server providing computing service. Then the actual minimum time to mine a new block is 
\begin{equation}\label{ActualTime}
{T_{a}} = \beta {\mathcal{R}_{\min }}.
\end{equation}
So we keep the threshold number of rounds $\mathcal{R}_{th}$ fixed in the process of adjusting the difficulty, which is equivalent to keeping the block time unchanged.
Next, we are mainly interested in the average rounds for mining $G$ blocks, then we can obtain
\begin{equation}\label{AverageR}
\begin{gathered}
\bar {\mathcal{R}} = \frac{1}{G}\sum\limits_{g = 1}^G {{R_g}}   
= \frac{1}{G}\sum\limits_{g = 1}^G {\left( {\frac{{\sum\limits_{j \in \mathcal{N}} {\frac{{c \cdot {4^h}}}{{B + r{s_j}(g)}}} }}{{N - 1}}} \right)},  \hfill \\ 
\end{gathered} 
\end{equation}
where $\mathcal{R}_g$ denotes the interaction rounds between MEC server and all IoT devices for mining $g$-th new block. The variance of rounds for mining $G$ blocks is given as
\begin{equation}\label{VarianceR}
\begin{gathered}
\operatorname{var} \left( \mathcal{R} \right) = \frac{1}{G}\sum\limits_{g = 1}^G {\left( {\frac{1}{G}\sum\limits_{g = 1}^G {\left( {\frac{{\sum\limits_{j \in \mathcal{N}} {\frac{{c \cdot {4^h}}}{{B + r{s_j}(g)}}} }}{{N - 1}}} \right)} } \right.} - {\left. {\;\;\frac{{\sum\limits_{j \in \mathcal{N}} {\frac{{c \cdot {4^h}}}{{B + r{s_j}(g)}}} }}{{N - 1}}} \right)^2}. \hfill \\ 
\end{gathered} 
\end{equation}

From the above expression, we can see that the variance of rounds only depends on the number of mining blocks when other parameters are constant. Particularly, the variance of these rounds will  become worse with smaller block cycle $G$. Then, the rule of difficulty adjustment factor $h$ is to minimize the distance between the average rounds for $G$ blocks in (\ref{AverageR}) and the threshold rounds $\mathcal{R}_{th}$ as follows
\begin{equation}\label{Estimation_h}
\begin{gathered}
\hat h = \arg \mathop {\min }\limits_h \left| {\bar{\mathcal{R}} - {\mathcal{R}_{th}}} \right| \hfill \\
\;\;\; = \arg \mathop {\min }\limits_h \left| {\frac{1}{G}\sum\limits_{g = 1}^G {\left( {\frac{{\sum\limits_{j \in \mathcal{N}} {\frac{{c \cdot {4^h}}}{{B + r{s_j}(g)}}} }}{{N - 1}}} \right)} - {\mathcal{R}_{th}}} \right|. \hfill \\ 
\end{gathered} 
\end{equation}
When ${\bar {\mathcal{R}} = {\mathcal{R}_{th}}}$, then the optimal difficulty adjustment factor $h^*$ is expressed as follows
\begin{equation}\label{Optimal_h}
{h^*} = \frac{1}{2}{\log}\frac{{\left( {N - 1} \right)G{\mathcal{R}_{th}}}}{{\sum\limits_{g = 1}^G {\left( {\sum\limits_{j \in \mathcal{N}} {\frac{c}{{B + r{s_j(g)}}}} } \right)} }},
\end{equation}
where the $s_i(g)$ may change for mining different blocks. Eventually, combining the above analysis part, we update the difficulty factor $h$ based on previous $G$ blocks. 
We propose the blockchain's difficulty adjustment mechanism to be summarized as follows.

\begin{itemize}
	\item \textbf{Step 1:} Start with an initial difficulty factor $h_0$, yielding the initial rounds according to equation (\ref{Rmin}).
	
	\item \textbf{Step 2:} Keep $h_0$ constant over the difficulty adjustment interval of $G$ blocks. The average number of rounds can be obtained by (\ref{AverageR}).
	
	\item \textbf{Step 3:} After $G$ blocks, use adjustment rule (\ref{Estimation_h}) to update the difficulty factor $h$, then get a new $h^*$ according to (\ref{Optimal_h}).
	
	\item \textbf{Step 4:} Repeat this process with $h^*$ as initial difficulty factor.
\end{itemize}
It is easy to see that we will increase the difficulty factor $h$ if the average rounds of mining the previous $G$ blocks are shorter than $\mathcal{R}_{th}$, and decrease the difficulty factor $h$ if the average rounds of mining the previous $G$ blocks are longer than $\mathcal{R}_{th}$. According to this proposed difficulty adjustment mechanism, the blockchain network can keep block times stable during a long period of time.
\section{Numerical Results}\label{sec:SimulationResults}
In this section, we demonstrate simulation results to justify the effectiveness of our proposed nonce ordering algorithm in untrusted MEC PoW scheme, evaluate the expected optimal user's nonce selection strategies and the performance of blockchain's difficulty adjustment mechanism. Here, we consider a group of 3 users and each user's block size $s_i$ is uniformly distributed over $(0,1024]$, for $i=1,2,3$. The default parameter values are presented as follows: $B = {10^4}$, $r = 2$, $c = 0.001$, and $h = 12$. Note that some of these system parameters are varied in different simulation scenarios.
\subsection{Proposed Nonce Ordering Algorithm}\label{ProposedNonceOrderingAlgorithm}
First of all, in order to verify the rationality of the our proposed ordering algorithm, i.e., for any position in the merged sequence by using Algorithm \ref{orderingalgorithm1}, the number of selected nonces of each original sequence is proportional to its nonce length, we consider to compare the target and actual probability distribution. Without loss of generality, we assume only three users in the mobile blockchain system and submitted nonce numbers have a fixed proportion as $M_1:M_2:M_3=1:3:6$, i.e., the target probability distribution is $P=(0.1,0.3,0.6)$, while the total nonce numbers $M=M_1+M_2+M_3$ is changed. We compare our proposed ordering Algorithm 1 with weighted round robin (WRR) algorithm, where the weights of user 1, user 2 and user 3 are 1, 3 and 6, respectively. 
\begin{figure*}[!t]
	\centering 
	\subfigure[0.2M]{ 
		\label{fig:subfig:a} 
		\includegraphics[width=0.478\linewidth]{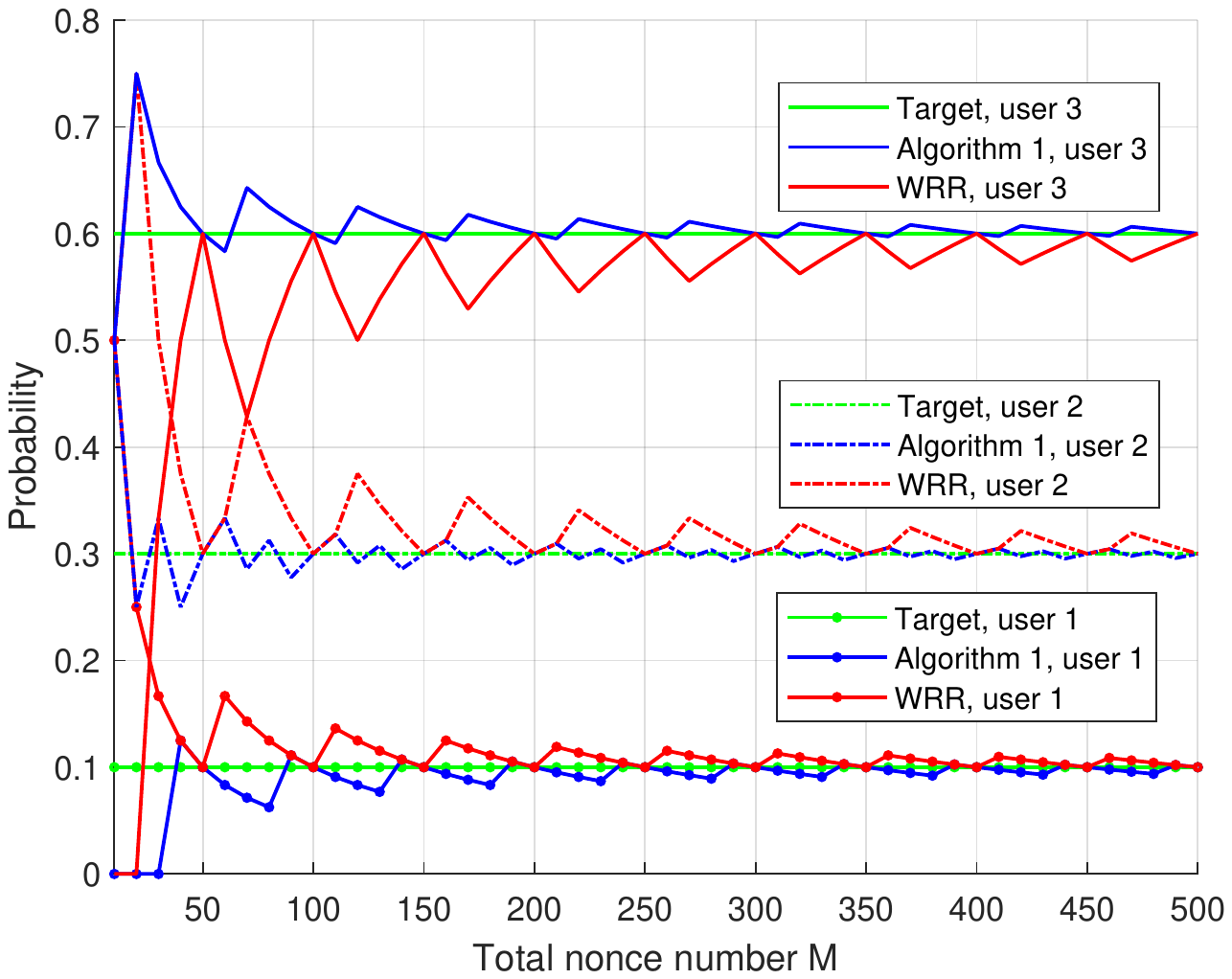}
	} 
	\subfigure[0.5M]{ 
		\label{fig:subfig:b} 
		\includegraphics[width=0.478\linewidth]{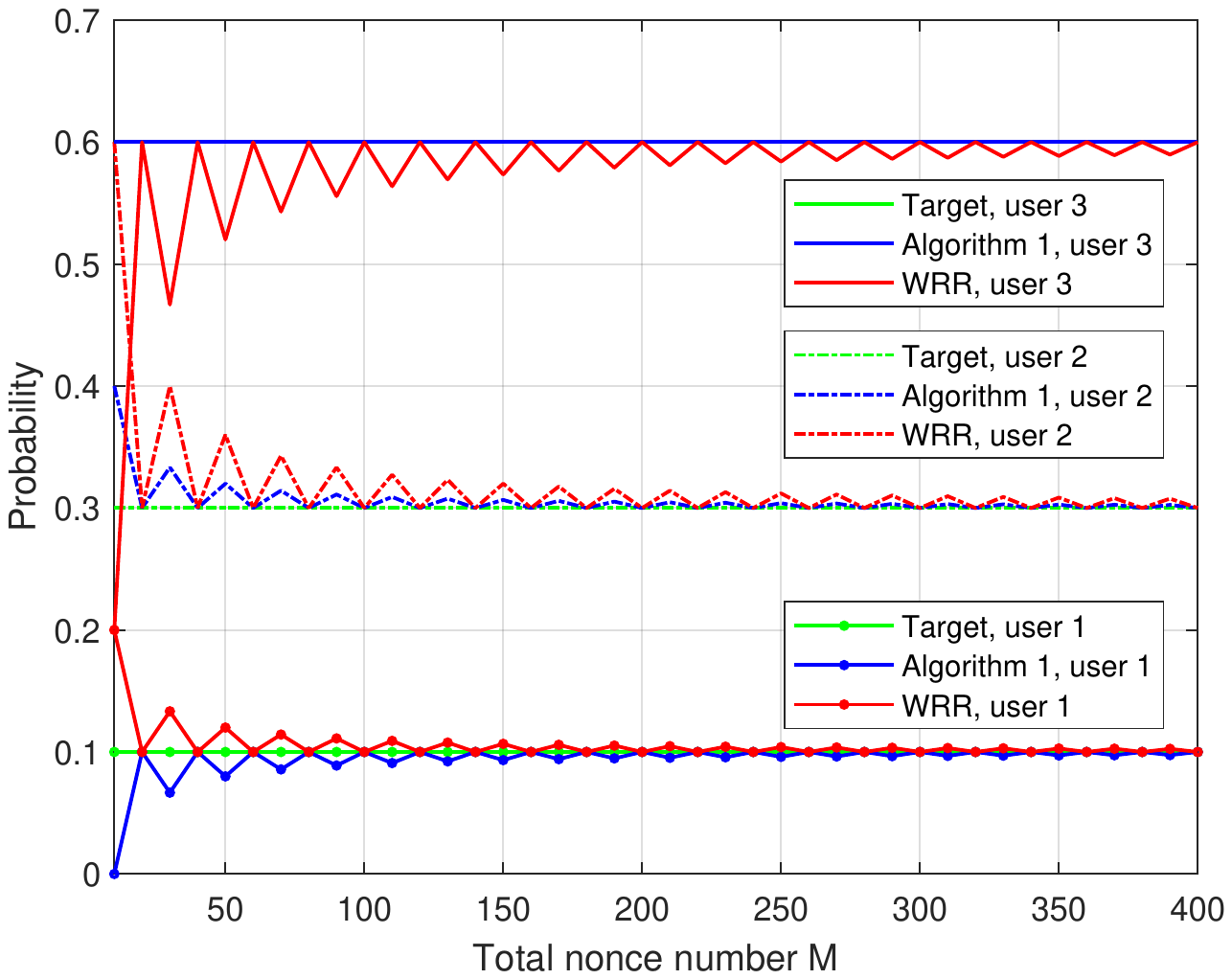}
	} 
	\caption{Comparison of target and actual probability distribution by Algorithm \ref{orderingalgorithm1} and WRR algorithm.} 
	\label{fig:subfig} 
\end{figure*}

Fig.~\ref{fig:subfig} reveals the comparison of target and actual probability distribution by Algorithm 1 and WRR algorithm. Here, we take the first $0.2M$ and $0.5M$ nonces of the merged sequence in Fig.~\ref{fig:subfig:a} and Fig.~\ref{fig:subfig:b}, respectively. By using Algorithm \ref{orderingalgorithm1}, it can be observed that actual probability has a little error with target probability when $M$ is small and then it approaches or even equals the target probability for all three users when $M$ increases in Fig.~\ref{fig:subfig:a} and Fig.~\ref{fig:subfig:b}. The reason why the performance of Algorithm \ref{orderingalgorithm1} is not good when the total nonce number $M$ is small, and then the nonce length of each user is smaller. For example, if $M=10$, user 1 has only one nonce and then we take any position of the merged sequence to verify the rationality of the ordering algorithm. The selected probability of user 1 mostly is $0$, not the target probability $0.1$. Moreover, because of the large amount of nonce computing in the  blockchain network (e.g., the SHA-256 hash function commonly used by Bitcoin requires $2^{32}$ nonce hash computing), the shortcoming of Algorithm \ref{orderingalgorithm1} can be ignored when $M$ is small. When the total nonce length approaches infinity, the user's successful mining probability is nearest to the target probability.

On the other hand, the fairness performance of the WRR algorithm is significantly worse than that of Algorithm \ref{orderingalgorithm1}, the WRR algorithm becomes better with the increment of $M$, however, the convergence of WRR algorithm is still poor due to the limitations of the WRR algorithm. For example, by using WRR algorithm, the user 3 with a high weight is always selected until the number of weights is reached, and then the next user 1 will be selected. This implies that the service is consecutively provided on the same user 3, resulting in the next user 1 with a low weight may be idle and having no smooth resource allocation. By comparing Fig.~\ref{fig:subfig:a} and Fig.~\ref{fig:subfig:b}, it can be see that our proposed nonce ordering algorithm, Algorithm \ref{orderingalgorithm1}, can provide a much fairer hash computing service for all users than the WRR algorithm.
\begin{figure}[!t]
	\begin{minipage}[t]{0.5\linewidth}
		\centering
		\includegraphics[width=1.0\linewidth]{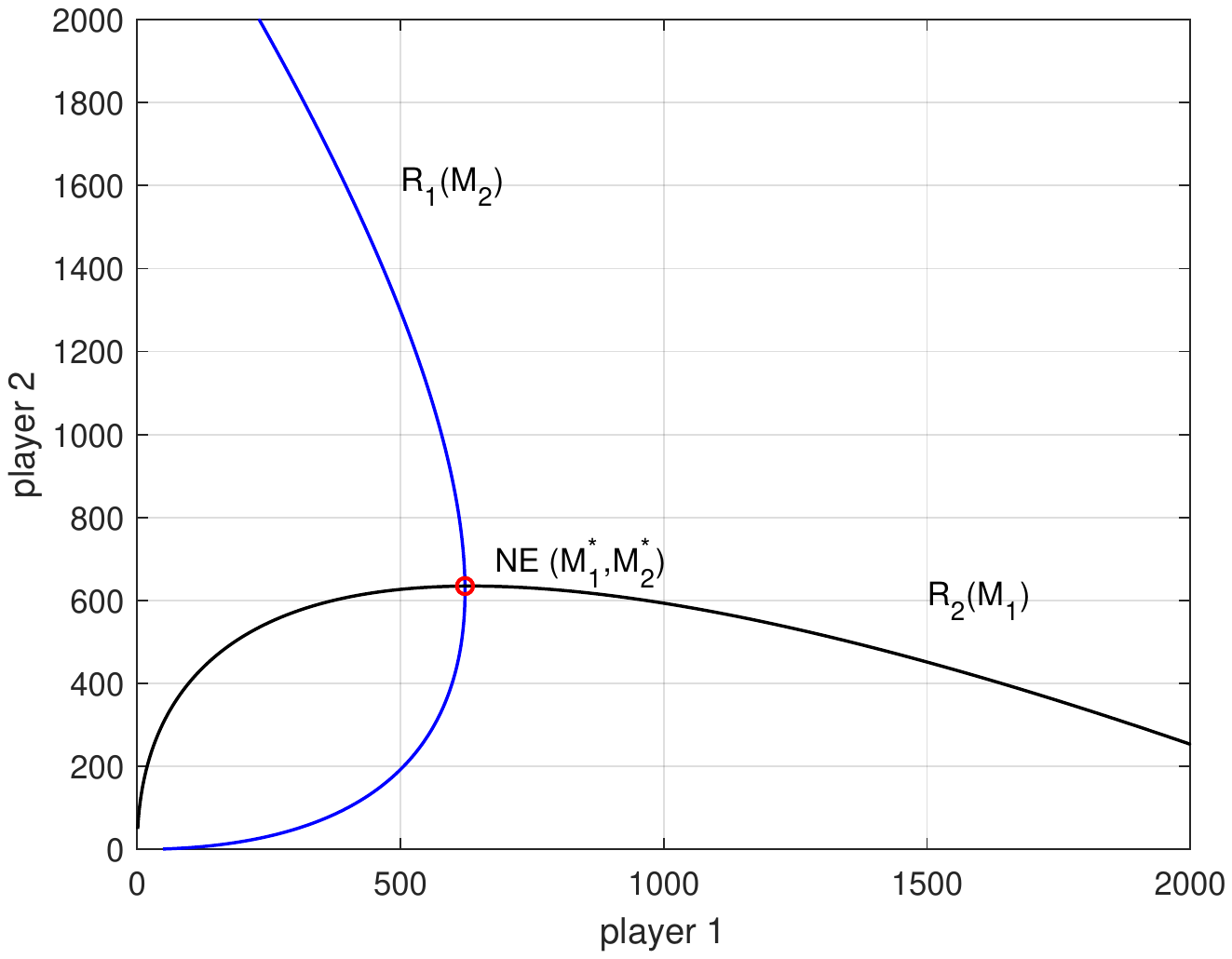}
		\caption{The players' best response functions\\ in the non-cooperative game.}
		\label{fig:bestresponse}
	\end{minipage}
	\begin{minipage}[t]{0.5\linewidth}
		\centering
		\includegraphics[width=1.0\linewidth]{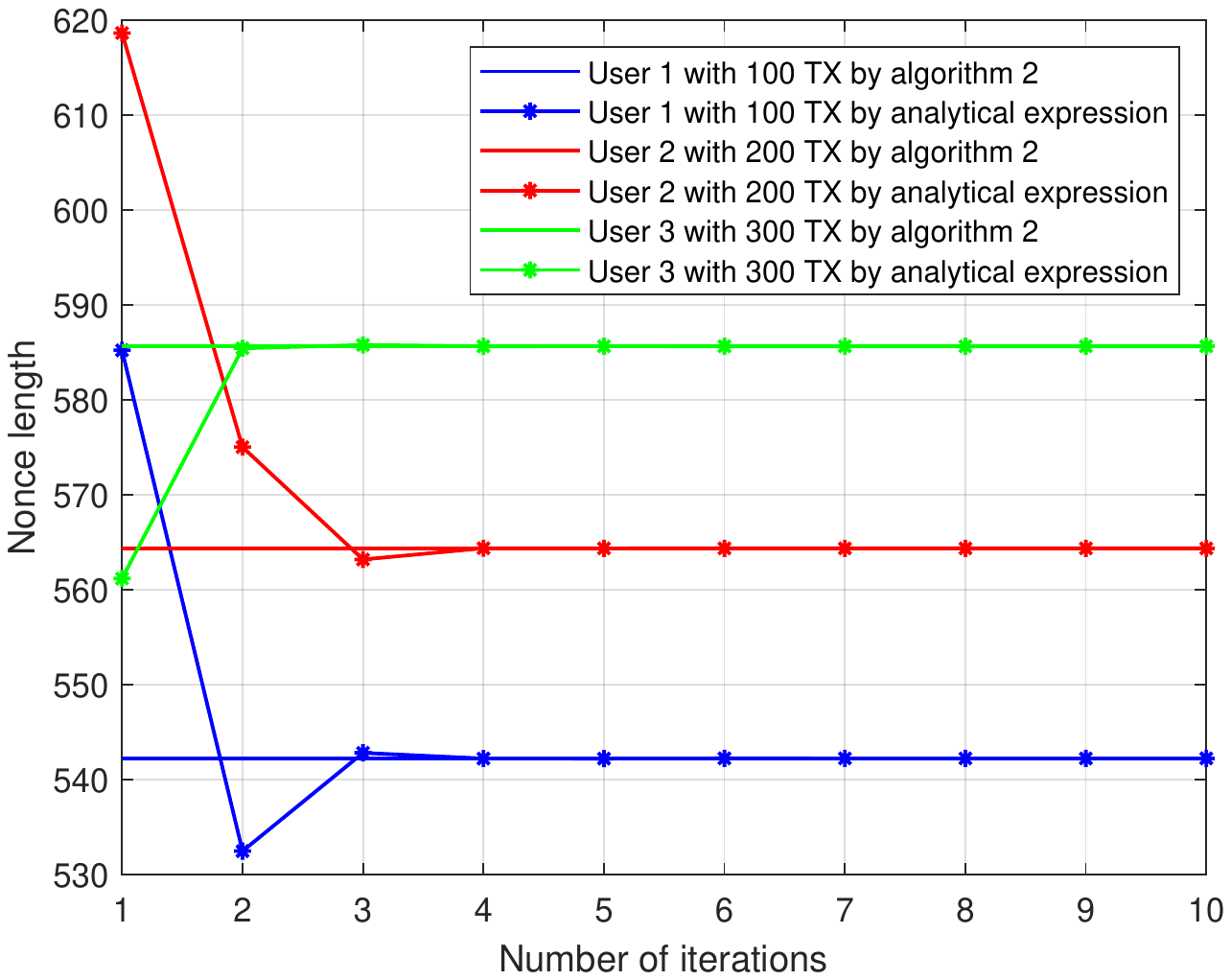}
		\caption{Nonce lengths for all users over\\ iterations.}
		\label{fig:alternatingalgorithm}
	\end{minipage}
\end{figure}
\subsection{Nonce Selection Strategies of Users }\label{NonceSelectionStrategiesofUsers}
We assume that only two users in this MEC-aided mobile blockchain network and the parameters of block size are $s_1=100$ and $s_2=200$. To find the NE we construct the best response functions by (\ref{NE}), for every player $i$, $i=1,2$, the best response functions can be obtained by using (\ref{solvesetequa2})
\begin{equation}\label{player1}
{R_1}\left( {{M_2}} \right) = \sqrt {\frac{{{M_2}\left( {B + r{s_1}} \right)}}{{c \cdot {2^h}}}}  - {M_2},
\end{equation} 
\begin{equation}\label{player2}
{R_2}\left( {{M_1}} \right) = \sqrt {\frac{{{M_1}\left( {B + r{s_2}} \right)}}{{c \cdot {2^h}}}}  - {M_1}.
\end{equation} 
In Fig.~\ref{fig:bestresponse}, we plot the best response functions of the players in (\ref{player1}) and (\ref{player2}). We can observe from Fig.~\ref{fig:bestresponse} that the best response function $R_1$ associates a unique strategy for player 1 to each strategy of player 2. Similarly, the best response function $R_2$ associates a unique strategy for player 2 to each strategy of player 1. As shown in Fig.~\ref{fig:bestresponse}, the two best response functions intersect at a unique point $(M_1^*,M_2^*)$. In fact, this point constitutes the unique pure-strategy NE of the non-cooperative game, at this point, every player's nonce selection strategy is the best response to the other player's strategies, i.e., $M_1^*=R_1(M_2^*)$ and $M_2^*=R_2(M_1^*)$. Thus we can conclude that this non-cooperative game exists a unique NE. Moreover, we can see that the player's nonce length decreases for remaining own revenue when other users' length is larger than their best strategy. These simulation results are consistent with the above NE analysis in Section \ref{sec:NEanalysis}. 
\begin{figure}[!t]
	\begin{minipage}[t]{0.5\linewidth}
		\centering
		\includegraphics[width=1.0\linewidth]{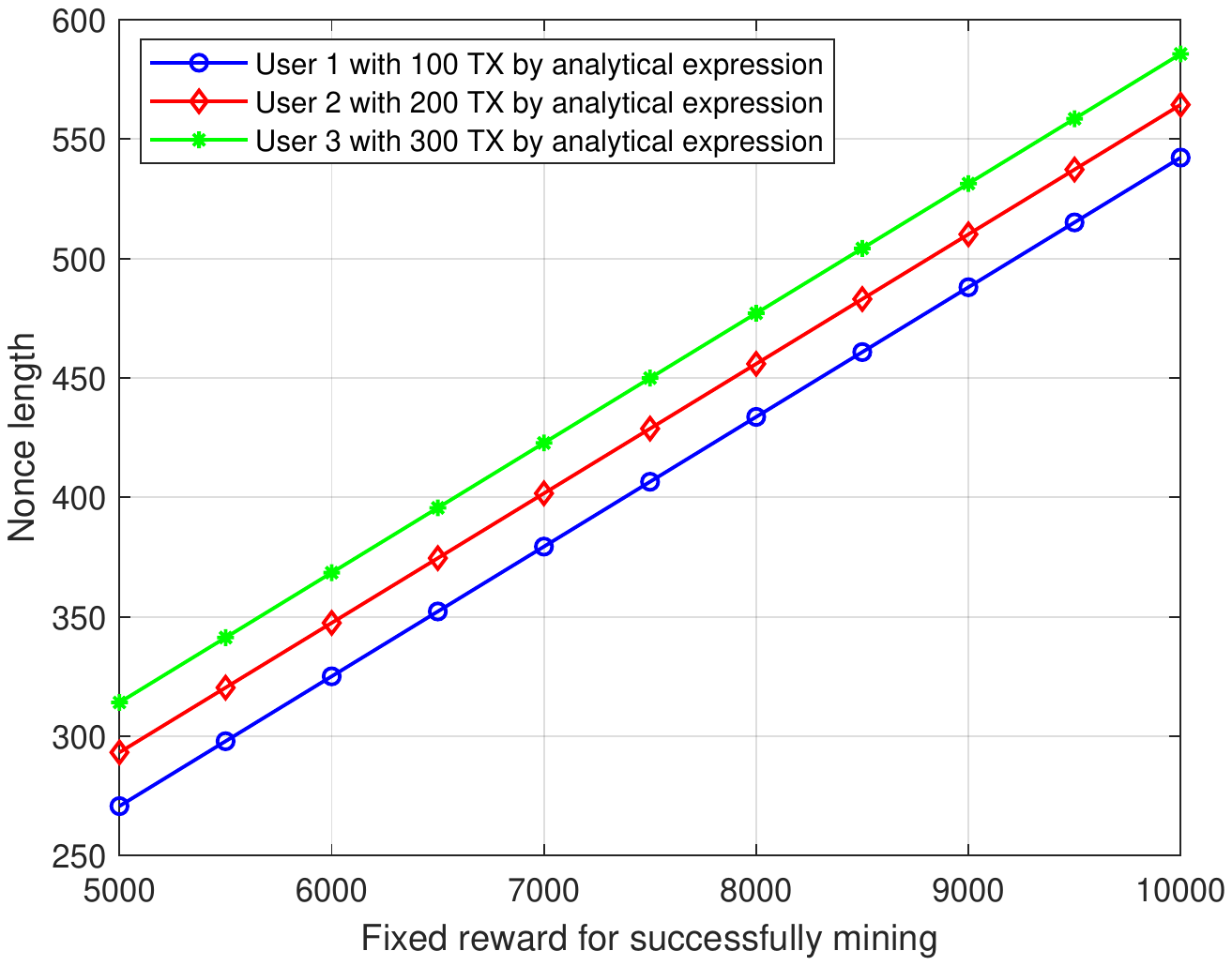}
		\caption{Nonce lengths for all users versus\\ the fixed reward for successfully mining.}
		\label{fig:noncelengthvsfixedreward}
	\end{minipage}%
	\begin{minipage}[t]{0.5\linewidth}
		\centering
		\includegraphics[width=1.0\linewidth]{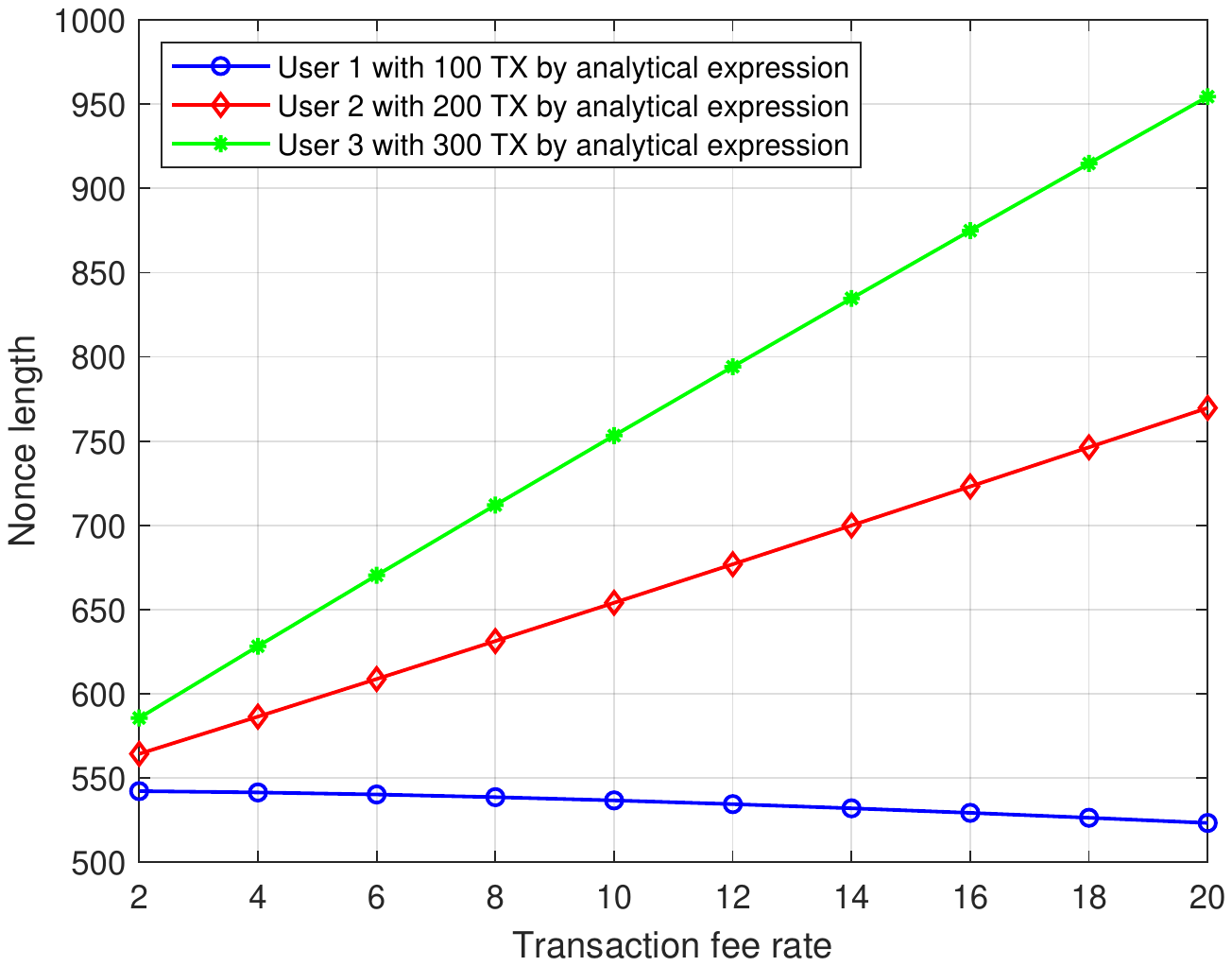}
		\caption{Nonce lengths versus the transaction\\ fee rate.}
		\label{fig:variablerewardfactor}
	\end{minipage}
\end{figure}

To demonstrate the performance of the proposed optimization algorithm of user's nonce selection and the impacts of various parameters on the nonce selection strategies, we consider three users in this MEC-aided mobile blockchain system with the following parameters of block size: $s_1=100,s_2=200,s_3=300$.
First, the convergence performance of the proposed alternating optimization algorithm of user's nonce selection is depicted in Fig.~\ref{fig:alternatingalgorithm}. We can see that nonce length of Algorithm \ref{alternatingalgorithm1} is unstable in the first 4 iterations. When the iteration is larger than 4, the alternating Algorithm \ref{alternatingalgorithm1} reaches a stable state and the gap between the proposed alternating Algorithm \ref{alternatingalgorithm1} and the analytical solution is rather small for all users, which verifies the good convergence of our proposed Algorithm \ref{alternatingalgorithm1} and the correctness of \textit{Theorem}~\ref{finalNE} about analytical expression (\ref{finalNE1}). The impacts of several parameters including fixed reward and transaction fee rate on user's nonce selection strategies are investigated in Fig.~\ref{fig:noncelengthvsfixedreward}-\ref{fig:variablerewardfactor}.
We illustrate nonce lengths for all users versus the fixed reward for successfully mining in Fig.~\ref{fig:noncelengthvsfixedreward}. We observed that the optimal nonce lengths for all users increase with the increment of fixed reward. Because of the fixed reward ascending, all users have greater incentives to cause higher nonce hash computing demand. Next, Fig.~\ref{fig:variablerewardfactor} illustrates nonce lengths for all users versus the transaction fee rate. We can see that the optimal nonce lengths for user 2 and user 3 raise with the transaction fee rate ascending. Nevertheless, the optimal nonce length for user 1 reduces with the increment of transaction fee rate. This reason is that the transaction fee rate increases, the incentive of each user are greater to have higher nonce hash computing demand. But the incentives of user 1 with smaller block size are still not much as that of user 2 and user 3, and becomes smaller than that of user 2 and user 3 when the transaction fee rate increases.
\begin{figure}[!t]
	\centering 
	\subfigure[$G=10$]{ 
		\label{fig:subfig:G=10} 
		\includegraphics[width=0.478\linewidth]{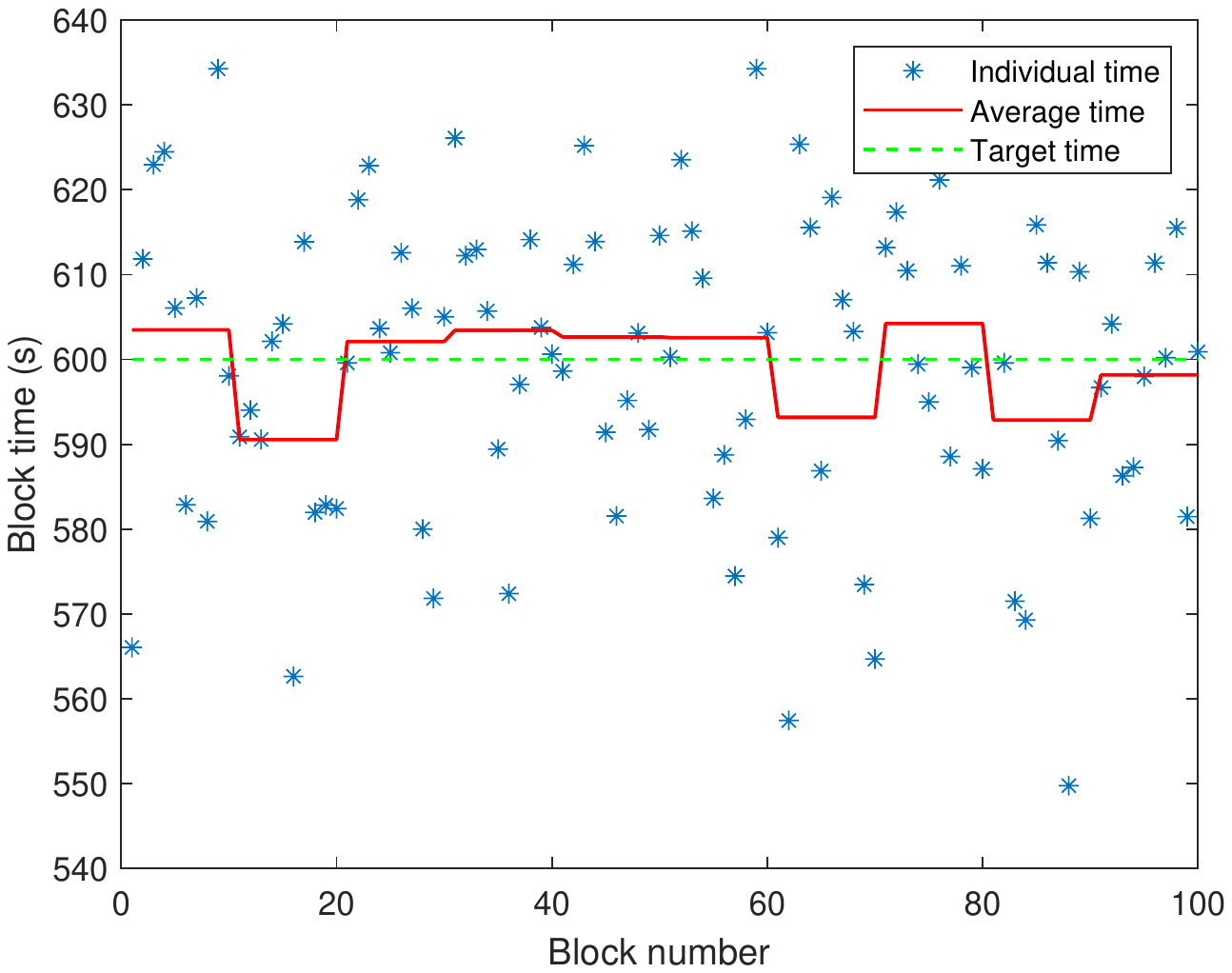}
	} 
	\subfigure[$G=2$]{ 
		\label{fig:subfig:G=2} 
		\includegraphics[width=0.478\linewidth]{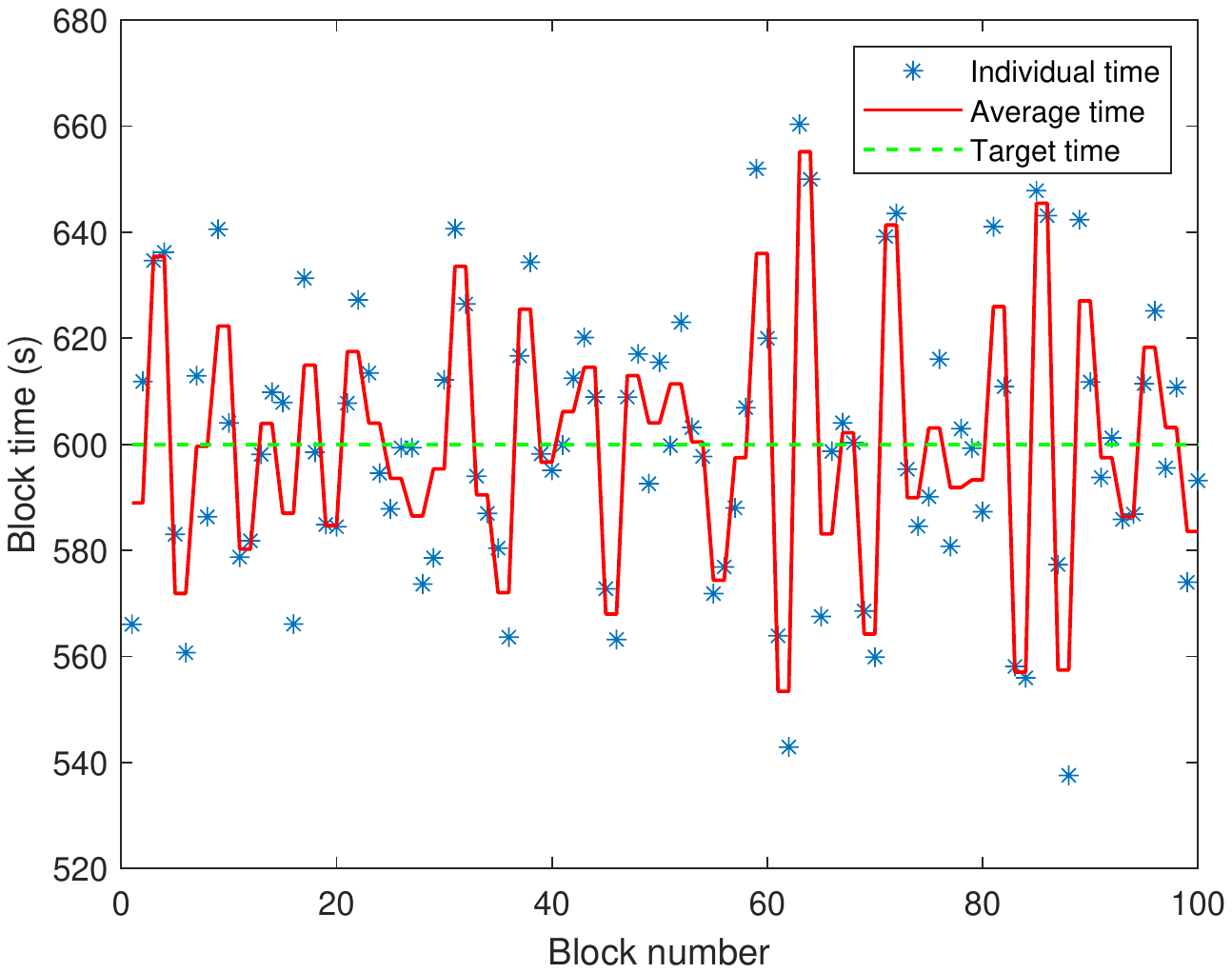}
	} 
	\caption{Block time versus the block number.} 
	\label{fig:blocktime} 
\end{figure}
\subsection{Difficulty's Adjustment Mechanism}
In the end, to display the performance of the proposed blockchain's difficulty adjustment mechanism, we plot the block time versus the block number for $G=10$ and $G=2$ in Fig.~\ref{fig:subfig:G=10} and Fig.~\ref{fig:subfig:G=2}, respectively. Here, we assume that each round time $\beta$ of the interaction process between MEC server and users is set to $120s$ and the target time of mining a block is denoted as $\beta R_{th}=600s$ for Bitcoin system. Note that blue '*' represents individual block times, and the red line presents the average block time within $G$ block intervals.
Firstly, in Fig.~\ref{fig:blocktime}, though the individual block time reveals a random property, the gap between the individual block time and the target time is within one minute, which ensures the effectiveness of our proposed difficulty adjustment mechanism. As observed from Fig.~\ref{fig:subfig:G=10} and Fig.~\ref{fig:subfig:G=2}, we can see that average block times fluctuate around the target time $600s$ every $G$ block intervals, because the blockchain's difficulty factor $h$ updates every $G$ blocks by using the proposed blockchain's difficulty adjustment mechanism. Meanwhile, the variance between the average block time and the target time of $G=2$ in Fig.~\ref{fig:subfig:G=2} is significantly larger than that of $G=10$ in Fig.~\ref{fig:subfig:G=10}, which is also consistent with the theoretical results of the above expression (\ref{VarianceR}). We can see that the variance of these block times for smaller $G$ becomes particularly worse than larger $G$. Accordingly, we should select the appropriate block intervals in the process of adjusting the blockchain's difficulty value.
\section{Conclusion}\label{sec:conclu}
In this paper, we have considered the untrusted MEC PoW scheme in the mobile blockchain-enabled IoT network. We have proposed a nonce ordering algorithm in this scheme, which guarantees that the number of selected nones of each original sequence is proportional to its nonce length for any position in the merged sequence. Then, the user's nonce selection strategies have been analyzed by using a non-cooperative game, and the NE has been considered as the solution. Furthermore, we have illustrated that cooperation approach of the repeated game is not feasible for IoT users in blockchain networks. Finally, we have designed the blockchain's difficulty adjustment mechanism to keep block times stable during a long period of time. Simulation results indicate that the nonce ordering algorithm can allocate computing resource of the untrusted MEC server  more fairly. The mobile IoT users can achieve optimal nonce selection strategies and we have verified that the proposed blockchain's difficulty adjustment mechanism is effective.
\bibliographystyle{IEEEtran}
\bibliography{IEEEabrv,BC_MEC}

\begin{thebibliography}{10}
\providecommand{\url}[1]{#1}
\csname url@samestyle\endcsname
\providecommand{\newblock}{\relax}
\providecommand{\bibinfo}[2]{#2}
\providecommand{\BIBentrySTDinterwordspacing}{\spaceskip=0pt\relax}
\providecommand{\BIBentryALTinterwordstretchfactor}{4}
\providecommand{\BIBentryALTinterwordspacing}{\spaceskip=\fontdimen2\font plus
\BIBentryALTinterwordstretchfactor\fontdimen3\font minus
  \fontdimen4\font\relax}
\providecommand{\BIBforeignlanguage}[2]{{%
\expandafter\ifx\csname l@#1\endcsname\relax
\typeout{** WARNING: IEEEtran.bst: No hyphenation pattern has been}%
\typeout{** loaded for the language `#1'. Using the pattern for}%
\typeout{** the default language instead.}%
\else
\language=\csname l@#1\endcsname
\fi
#2}}
\providecommand{\BIBdecl}{\relax}
\BIBdecl

\bibitem{nakamoto2008bitcoin}
S.~Nakamoto, ``Bitcoin: A peer-to-peer electronic cash system,'' Oct. 2008.

\bibitem{dai2019blockchain}
H.-N. {Dai}, Z.~{Zheng}, and Y.~{Zhang}, ``Blockchain for internet of things: A
  survey,'' \emph{arXiv:1906.00245}, Jun. 2019.

\bibitem{cao2019when}
B.~{Cao,} \emph{et~al.}, ``When internet of things meets blockchain: challenges
  in distributed consensus,'' \emph{arXiv:1905.06022}, May 2019.

\bibitem{conoscenti2016blockchain}
M.~{Conoscenti}, A.~{Vetrò}, and J.~C. {De Martin}, ``Blockchain for the
  internet of things: A systematic literature review,'' in \emph{IEEE/ACS 13th
  Int. Conf. of Comp. Syst. and Applica. (AICCSA)}, Nov. 2016, pp. 1--6.

\bibitem{banerjee2018blockchain}
M.~Banerjee, J.~Lee, and K.-K.~R. Choo, ``A blockchain future for internet of
  things security: A position paper,'' \emph{Digi. Commun. and Net.}, vol.~4,
  no.~3, pp. 149--160, Aug. 2018.

\bibitem{danzi2018analysis}
P.~Danzi, A.~E. Kalor, C.~Stefanovic{,} \emph{et~al.}, ``Analysis of the
  communication traffic for blockchain synchronization of {IoT} devices,'' in
  \emph{Proc. IEEE Int. Conf. Commun. (ICC)}, May 2018, pp. 1--7.

\bibitem{danzi2019delay}
P.~Danzi, A.~E. Kalor, C.~Stefanovic, and P.~Popovski, ``Delay and
  communication tradeoffs for blockchain systems with lightweight {IoT}
  clients,'' \emph{IEEE Internet Things J.}, vol.~6, no.~2, pp. 2354--2365,
  Apr. 2019.

\bibitem{xiong2018when}
Z.~Xiong, Y.~Zhang, D.~Niyato{,} \emph{et~al.}, ``When mobile blockchain meets
  edge computing,'' \emph{IEEE Commun. Mag.}, vol.~56, no.~8, pp. 33--39, Aug.
  2018.

\bibitem{xiong2018cloud}
Z.~Xiong{,} \emph{et~al.}, ``Cloud/fog computing resource management and
  pricing for blockchain networks,'' \emph{IEEE Internet Things J.}, pp. 1--1,
  Sept. 2018.

\bibitem{jiao2018social}
Y.~{Jiao}, P.~{Wang}, D.~{Niyato,} \emph{et~al.}, ``Social welfare maximization
  auction in edge computing resource allocation for mobile blockchain,'' in
  \emph{Proc. IEEE Int. Conf. Commun. (ICC)}, Jul. 2018, pp. 1--6.

\bibitem{jiao2019auction}
Y.~Jiao, P.~Wang, D.~Niyato, and K.~Suankaewmanee, ``Auction mechanisms in
  cloud/fog computing resource allocation for public blockchain networks,''
  \emph{IEEE Trans. Parallel Distrib. Syst.}, pp. 1--1, Mar. 2019.

\bibitem{luong2018optimal}
N.~C. Luong, Z.~Xiong, P.~Wang, and D.~Niyato, ``Optimal auction for edge
  computing resource management in mobile blockchain networks: A deep learning
  approach,'' in \emph{Proc. IEEE Int. Conf. Commun. (ICC)}, Jul. 2018, pp.
  1--6.

\bibitem{Bayhan2018Spass}
S.~{Bayhan}, A.~{Zubow}, and A.~{Wolisz}, ``Spass: Spectrum sensing as a
  service via smart contracts,'' in \emph{IEEE Int. Symp. on Dyna. Spec, Access
  Net. (DySPAN)}, Oct. 2018, pp. 1--10.

\bibitem{Kotobi2017blockchain}
K.~{Kotobi} and S.~G. {Bilén}, ``Blockchain-enabled spectrum access in
  cognitive radio networks,'' in \emph{Proc. Int. Wireless Commun. Symp.
  (WTS)}, Apr. 2017, pp. 1--6.

\bibitem{Gamal2019ASingle}
A.~{El Gamal} and H.~{El Gamal}, ``A single coin monetary mechanism for
  distributed cooperative interference management,'' \emph{IEEE Wireless
  Commun. Lett.}, vol.~8, no.~3, pp. 757--760, Jun. 2019.

\bibitem{garay2015bitcoin}
J.~Garay, A.~Kiayias, and N.~Leonardos, ``The bitcoin backbone protocol:
  Analysis and applications,'' in \emph{Annual Int. Conf. on the Theo. and
  Applica. of Crypto. Techn.}\hskip 1em plus 0.5em minus 0.4em\relax Springer,
  Apr. 2015, pp. 281--310.

\bibitem{bentov2016cryptocurrencies}
I.~Bentov, A.~Gabizon, and A.~Mizrahi, ``Cryptocurrencies without proof of
  work,'' in \emph{Int. Conf. on Finan. Crypto. and Data Secu.}\hskip 1em plus
  0.5em minus 0.4em\relax Springer, Aug. 2016, pp. 142--157.

\bibitem{mach2017mobile}
P.~{Mach} and Z.~{Becvar}, ``Mobile edge computing: A survey on architecture
  and computation offloading,'' \emph{IEEE Commun. Surveys Tuts.}, vol.~19,
  no.~3, pp. 1628--1656, Mar. 2017.

\bibitem{mao2017a}
Y.~Mao, C.~You, J.~Zhang{,} \emph{et~al.}, ``A survey on mobile edge computing:
  The communication perspective,'' \emph{IEEE Commun. Surveys Tuts.}, vol.~19,
  no.~4, pp. 2322--2358, Aug. 2017.

\bibitem{yu2018a}
W.~{Yu,} \emph{et~al.}, ``A survey on the edge computing for the internet of
  things,'' \emph{IEEE Access}, vol.~6, pp. 6900--6919, Nov. 2018.

\bibitem{liu2018computation}
M.~Liu, F.~R. Yu, Y.~Teng{,} \emph{et~al.}, ``Computation offloading and
  content caching in wireless blockchain networks with mobile edge computing,''
  \emph{IEEE Trans. Veh. Technol.}, vol.~67, no.~11, pp. 11\,008--11\,021, Aug.
  2018.

\bibitem{wu2018optimal}
Y.~{Wu,} \emph{et~al.}, ``Optimal computational power allocation in
  multi-access mobile edge computing for blockchain,'' \emph{Sensors}, vol.~18,
  no.~10, p. 3472, Oct. 2018.

\bibitem{liu2019distributed}
M.~{Liu}, F.~Richard{,} \emph{et~al.}, ``Distributed resource allocation in
  blockchain-based video streaming systems with mobile edge computing,''
  \emph{IEEE Trans. Wireless Commun.}, vol.~18, no.~1, pp. 695--708, Jan. 2019.

\bibitem{han2012game}
H.~Zhu, D.~Niyato, W.~Saad{,} \emph{et~al.}, \emph{Game Theory in Wireless and
  Communication Networks: Theory, Models, and Applications}.\hskip 1em plus
  0.5em minus 0.4em\relax Cambridge, UK: Cambridge university press, 2012.

\end{thebibliography}
\end{document}